\documentclass[lettersize,journal]{IEEEtran}
\usepackage{amsmath,amsfonts}
\usepackage{algorithmic}
\usepackage{algorithm}
\usepackage{array}
\usepackage[caption=false,font=normalsize,labelfont=sf,textfont=sf]{subfig}
\usepackage{textcomp}
\usepackage{stfloats}
\usepackage{url}
\usepackage{cite}
\usepackage{verbatim}
\usepackage{graphicx}
\usepackage{amsthm}
\usepackage{booktabs}
\usepackage{xcolor}
\hyphenation{op-tical net-works semi-conduc-tor IEEE-Xplore}

\begin{document}

\title{Distributed Neurodynamics-Based Backstepping Optimal Control for Robust Constrained Consensus of Underactuated Underwater Vehicles Fleet}

\author{Tao Yan,~\IEEEmembership{Graduate Student Member,~IEEE}, Zhe Xu,~\IEEEmembership{Member,~IEEE}, Simon X. Yang,~\IEEEmembership{Senior Member,~IEEE}, \\ S. Andrew Gadsden,~\IEEEmembership{Senior Member,~IEEE}
\thanks{This work was supported by the Natural Sciences and Engineering Research Council (NSERC) of Canada. \textit{(Corresponding author: Simon X. Yang.)}}
\thanks{T. Yan and S. X. Yang are with Advanced Robotics and Intelligent Systems (ARIS)
Laboratory, School of Engineering, University of Guelph, Guelph, ON
N1G2W1, Canada (e-mail: tyan03@uoguelph.ca; zxu02@uoguelph.ca; syang@uoguelph.ca). }
\thanks{Z. Xu and S. A. Gadsden are with Intelligent and Cognitive Engineering (ICE) Laboratory, Department of Mechanical Engineering, McMaster University, Hamilton, ON L8S4L8, Canada (e-mail: xu804@mcmaster.ca; gadsden@mcmaster.ca).}}



\maketitle

\begin{abstract}
Robust constrained formation tracking control of underactuated underwater vehicles (UUVs) fleet in three-dimensional space is a challenging but practical problem. To address this problem, this paper develops a novel consensus based optimal coordination protocol and a robust controller, which adopts a hierarchical architecture. On the top layer, the spherical coordinate transform is introduced to tackle the nonholonomic constraint, and then a distributed optimal motion coordination strategy is developed. As a result, the optimal formation tracking of UUVs fleet can be achieved, and the constraints are fulfilled. To realize the generated optimal commands better and, meanwhile, deal with the underactuation, at the lower-level control loop a neurodynamics based robust backstepping controller is designed, and in particular, the issue of "explosion of terms" appearing in conventional backstepping based controllers is avoided and control activities are improved. The stability of the overall UUVs formation system is established to ensure that all the states of the UUVs are uniformly ultimately bounded in the presence of unknown disturbances. Finally, extensive simulation comparisons are made to illustrate the superiority and effectiveness of the derived optimal formation tracking protocol.
\end{abstract}

\begin{IEEEkeywords}
Underactuated underwater vehicles (UUVs) fleet, robust constrained consensus formation tracking control, distributed optimal motion coordination, backstepping control, neurodynamics based control.
\end{IEEEkeywords}

\section{Introduction}
\IEEEPARstart{A}{utonomous} underwater vehicle (AUV) is a sort of marine mechatronics systems, and has been used to perform various underwater missions without human intervention \cite{8809889,4,10036108}. Recently, employing a group of autonomous underwater vehicles has attracted growing attention as multi-agent systems are proved to be more efficient, flexible and cost-effective compared to a single AUV, and appear also to be more robust when faced with disturbances or even faults. The main technical problems of this type of system lie in designing effective and efficient coordination protocols to make teams of AUVs perform tasks together in complicated marine conditions. In particular, formation control has increasingly become a focus in multiple AUVs coordination, considering its wide applications in practice. However, formation control of a group of AUVs is barely an easy thing to do, due to the nonlinear, uncertain and underactuated characteristics of dynamics, communication constraints as well as detrimental marine environments. Therefore, it is still an open and pressing problem for both societies of control and ocean engineering \cite{6,9709103,YANreview}.

Formation control of an AUVs fleet can be roughly separated into two major portions, that is, motion coordination and control. For the former, there are several structures and strategies commonly used to coordinate the motion of multiple AUVs, such as leader-following structure \cite{9694518,9557752}, virtual structure\cite{9,8294296}, and artificial potential field approach\cite{13}, etc. Besides coordination, efficient formation controllers are also key to achieving coordinated motions successfully. Following this procedure, extensive research efforts have been made in recent decades in order to synthesize effective and practical formation control protocols for AUVs fleet. In \cite{6547175}, a H$_2$/H$_\infty$ control scheme was proposed based on the leader-following structure to ensure the optimal formation performance when disturbances and communication delays may happen. While the linear quadratic based optimal control is fairly efficient at specific operating points, it may become restrictive when a wide range of operations is required, such as following a time-varying dynamic trajectory. In this respect, nonlinear control techniques have played an important role in the design of high-performance AUVs formation controllers and have been widely applied \cite{gao2018adaptive,shen2017trajectory,li2016receding}. To handle the nonlinearity and underactuation, an adaptive backstepping controller was synthesized with neural network approximation to drive a group of underactuated underwater vehicles (UUVs) to create formation via a leader-following structure \cite{gao2018adaptive}. By incorporating a data-driven predictor, the resulting formation control strategy also addressed the communication delay \cite{9844297}. System constraints fulfillment is another critical concern in designing practical controllers. As such, the model predictive control (MPC) method, as one of the optimal control techniques, was applied to resolve the AUV trajectory tracking problem subject to constraints \cite{shen2017trajectory}. In their studies, a Lyapunov-based backstepping nonlinear MPC algorithm was proposed with stability and feasibility guarantees. Based on a similar idea, receding horizon formation tracking of multiple UUVs with input limitation were addressed \cite{li2016receding}.

In addition to the nonlinearity and underactuation handling, the marine disturbances (e.g., ocean currents, waves and winds) as well as hydrodynamic effects have significant impacts on the acquirement of robust formation performance. Towards this end, a good many research works take advantage of the sliding mode control (SMC) method due to its great robustness in tackling any matched and bounded disturbances \cite{22, elmokadem2017terminal,GUERRERO2023113375,cui2017extended,cheng2018robust}. In \cite{22}, the authors presented an adaptive sliding mode formation control scheme to address issues of the variable added mass and communication constraints, and the overall closed-loop stability was analyzed using Lyapunov theory. To pursue a fast transient performance, a terminal SMC method was adopted for the tracking control of UUVs \cite{elmokadem2017terminal}. While SMC-based approaches are expected to obtain good robustness against the disturbances, the chattering phenomenon stops their applications from real AUV control implementation. To overcome this drawback, a higher order SMC method was proposed for chattering-free trajectory tracking control of AUVs \cite{GUERRERO2023113375}. With the integration of a neurodynamics model, a distributed bioinspired SMC scheme was proposed to address the robust formation tracking of a fleet of fully actuated AUVs \cite{YANbfc}. Other than sliding mode control strategies, observer techniques are other effective alternatives to the improvement of system robustness \cite{GUob}. In \cite{cui2017extended}, considering both unknown disturbances and uncertain nonlinearity, an extended state observer based integral SMC scheme was proposed for an underwater robot, and real-world experiments verified its effectiveness. Employing a similar technique, active disturbance rejection control was used in the dynamic controller design of multiple AUVs formation \cite{9486951}. In \cite{cheng2018robust}, the authors addressed robust finite-time consensus formation control of nonholonomic wheeled mobile robots, in which a finite-time observer was designed to estimate both velocities and disturbances, followed by an integral SMC controller. Likewise, based on a disturbance observer, distributed formation tracking for a group of underactuated AUVs in the horizontal plane was studied \cite{wang2021distributed}. 

While a vital amount of research results as mentioned above have been attained to study the formation control of AUVs fleet, there still are several aspects not well considered from a practical control point of view. Most of the existing AUVs formation protocol adopts a leader-following structure  \cite{6547175,gao2018adaptive,li2016receding,9486951,wang2021distributed}. In such an approach, it is assumed that all the vehicles can have access to the leaders' information, which would be rather restricted in reality. Besides, the robustness analysis in their methods is usually neglected, but it is quite crucial to maintain the feasibility of a method when faced with uncertainties. To handle the underactuation, many existing works follow a backstepping control design procedure, whereas such a method necessitates the derivative of designed virtual commands which is hard to obtain and its robustness is also limited. In terms of disturbance rejection, while sliding mode control behaves well for certain bounded disturbances \cite{22, elmokadem2017terminal,GUERRERO2023113375}, such a method essentially employs a high-gain strategy, thus intrinsically sensitive to the noise. The observer technique acts as an active disturbance compensation \cite{GUob}, yet its performance relies closely on the accurate modeling of particular disturbances, which is almost impossible for the marine situation. On the other hand, system constraints handling and performance optimization are also significant dimensions in the control design of real mechatronic systems, but barely resolved in the existing formation control literature.

Motivated by the above observations, this paper is concerned with the UUVs optimal formation tracking control with unknown disturbances as well as system constraints in three-dimensional (3D) space. Such a problem, clearly, is of practical interest but more challenging, and has not been well studied yet. The main contributions and novelties of this paper are detailed below:
\begin{enumerate}
\item{ A distributed robust optimal protocol is developed for the consensus formation tracking of a fleet of underwater autonomous vehicles in 3D space. The controlled plant is subject to velocity constraints, underactuation, and unknown disturbances.  }
\item { To deal with the underactuation, a spherical coordinate transformation is used, followed by a consensus based formation tracking design. Furthermore, to achieve optimal coordination and meanwhile fulfill the constraints, an on-line motion optimization procedure is developed, and the stability, feasibility, and real-time applicability are discussed.}
\item{To realize the planned optimal commands efficiently and robustly, a neurodynamics based backstepping controller is designed, in which the issue of “explosion of terms” is avoided and control performance is improved. Moreover, the stability and robustness properties are analyzed.}
\item{The overall stability result of the proposed UUVs formation system is derived, which shows that under some moderate conditions, all the states of the UUVs in the fleet can be steered into an ultimate bound even when faced with unknown disturbances. }
\end{enumerate}

The rest of the article is arranged as follows. Some preliminaries are presented in Section \ref{s2}. Section \ref{s3} addresses the constrained consensus formation tracking problem. Neuro-dynamics based robust backstepping controller shall be designed and analyzed in Section \ref{s4}. Section \ref{s5} provides extensive numerical simulations. The conclusion is made in Section \ref{s6}.



\section{PRELIMINARY AND PROBLEM FORMULATION}\label{s2}
In this section, some basic knowledge regarding the graph theory is presented, mathematical models of UUVs are described, and moreover the objective of formation tracking control of UUVs fleet is formulated.

\subsection{Preliminary on graph theory}\label{s2.1}
The communication topology of a UUVs fleet can be modeled by a weighted directed graph $G = \{ {V,E,A} \} $, and each vehicle in such a system can be treated as a node. As for a simple time-invariant graph $G$, it can be described by the vertex set $V = \{ { \nu _1, \nu _2, \ldots, \nu _N } \}$ , edge set $E \subseteq V \times V$, and weighted adjacency matrix $A = \left[ {{a_{ij}}} \right] \in \mathbb{R}^{\text{N} \times \text{N}}$. The element $\nu _i$ in vertex set $V$ denotes $i$-th UUV, and the index $i$ belongs to an index set $\Gamma  = \left\{ {1, \ldots ,N} \right\}$. If  $\nu_i$ is able to receive messages from $\nu_j$ ($i \ne j$), then, say, there exists an edge pointing from $\nu_j$ to $\nu_i$, i.e., $\left( {{\nu _i},{\nu _j}} \right) \in E$, and ${a_{ij}} > 0$; particularly, we call $\nu_j$ a neighbor of $\nu_i$, and all such $\nu_j$ form the set of neighbors of $\nu_i$, denoted by ${N_i} = \left\{ {j | {\left( {{\nu _i},{\nu _j}} \right) \in E} } \right\}$. Otherwise, there is no edge from $\nu_j$ to $\nu_i$, and $a_{ij} = 0$. Moreover, we define $a_{ii} = 0$ for all $i \in \Gamma$, and out-degree $d_{i} = \sum _{j \in N_i} {a_{ij}}$ associated with the node $i$. Then, the degree matrix and Laplacian matrix of graph $G$ are defined as $D = \text{diag} \left\{ {d_1, \ldots, d_N}  \right\} \in \mathbb{R} ^{\text{N} \times \text{N}}$ and $L = D-A$, respectively. A path in $G$ is defined by a set of successive adjacent nodes, starting from any $\nu_i$ and ending at $\nu_j$. If there is at least one path on any two nodes in graph $G$, then, say, graph $G$ is connected.

In order to make the UUVs fleet move along with a prescribed trajectory together, a reference must be defined ahead of time. The availability to the information of reference trajectory for $i$-th UUV is indicated by a parameter $b_i$; that is, if UUV $i$ have access to this information, then $b_i >0$; otherwise, $b_i = 0$, and define matrix  $B = \text{diag} \left( {b_1, \ldots, b_N} \right)$.
\newtheorem{assumption}{Assumption}
\begin{assumption} \label{assumption1}
For the considered multi-UUV formation control network, graph $G$ is connected, and moreover there is at least one UUV able to receive the information of reference trajectory, i.e., the elements of matrix $B$ are not all equal to zero.
\end{assumption}

\newtheorem{lemma}{Lemma}
\begin{lemma}\label{lemma1}
if Assumption \ref{assumption1} holds, then matrix $L+B$ is positive definite.
\end{lemma}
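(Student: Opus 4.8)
The plan is to exploit the sign structure of $M := L + B$. Writing it out, $M$ has diagonal entries $d_i + b_i$ and off-diagonal entries $-a_{ij} \le 0$, so $M$ is a Z-matrix; I would show it is in fact a nonsingular M-matrix, equivalently that every eigenvalue of $M$ has strictly positive real part (positive stability). When the communication weights are symmetric, $a_{ij}=a_{ji}$, this property coincides with ordinary symmetric positive definiteness, which is presumably the reading intended by the statement.

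First I would establish (weak) diagonal dominance. For each row $i$ the magnitude of the off-diagonal entries sums to $\sum_{j \ne i} a_{ij} = d_i$, while the diagonal entry equals $d_i + b_i$; hence
\[
|M_{ii}| = d_i + b_i \ge \sum_{j \ne i} |M_{ij}| = d_i,
\]
with strict inequality precisely in those rows where $b_i > 0$. Next I would invoke Assumption~\ref{assumption1}: connectivity of $G$ makes the adjacency matrix $A$, and therefore $M$, irreducible, while $B \ne 0$ guarantees at least one strictly dominant row. By the classical nonsingularity theorem for irreducibly diagonally dominant matrices, $M$ is then nonsingular.

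To locate the spectrum I would apply Gershgorin's disc theorem: every eigenvalue $\lambda$ of $M$ satisfies $|\lambda - (d_i + b_i)| \le d_i$ for some $i$, so each disc lies in the closed right half-plane and meets the imaginary axis at most at the origin. Combining this with the nonsingularity just established (which excludes $\lambda = 0$), no eigenvalue can sit on the imaginary axis, so $\operatorname{Re}(\lambda) > 0$ for all eigenvalues and $M$ is positive stable.

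The delicate point is exactly this last exclusion of imaginary-axis eigenvalues: Gershgorin by itself yields only $\operatorname{Re}(\lambda) \ge 0$, since the rows with $b_i = 0$ contribute discs tangent to the imaginary axis at the origin, so positivity is not automatic and must be forced through the nonsingularity coming from irreducibility together with $B \ne 0$. If instead one works in the undirected (symmetric) setting, a cleaner route sidesteps this obstacle: for any $x \ne 0$ one has
\[
x^\top M x = x^\top L x + \sum_{i} b_i x_i^2 \ge 0,
\]
and equality would force $x \in \operatorname{span}\{\mathbf{1}\}$, the kernel of the connected-graph Laplacian $L$, and simultaneously $\sum_i b_i x_i^2 = 0$; writing $x = c\,\mathbf{1}$ then gives $c^2 \sum_i b_i = 0$, which with $B \ne 0$ yields $c = 0$, i.e. $x = 0$, establishing $M \succ 0$.
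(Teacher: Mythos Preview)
The paper states Lemma~\ref{lemma1} without proof, treating it as a background fact, so there is no in-paper argument to compare against. Your proposal is correct and fills this gap. The irreducibly diagonally dominant route (together with the Gershgorin localisation and the observation that the discs meet the imaginary axis only at the origin) is sound for the general directed setting and yields positive stability of $M=L+B$; your alternative quadratic-form argument delivers genuine symmetric positive definiteness when $a_{ij}=a_{ji}$. You are also right to flag the distinction between the two readings: the paper's subsequent use of the lemma in the proof of Lemma~\ref{lm2}---taking $V_1=\tfrac12\, e^{\mathrm T}(L+B)^{-1}e$ as a Lyapunov function and obtaining $\dot V_1=-e^{\mathrm T}K_1 e + e^{\mathrm T}\sigma$---tacitly relies on $L+B$ being symmetric for that derivative computation to collapse as written, so the undirected interpretation is indeed the operative one.
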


\begin{figure}[!t]
\centering
\includegraphics[width=3in]{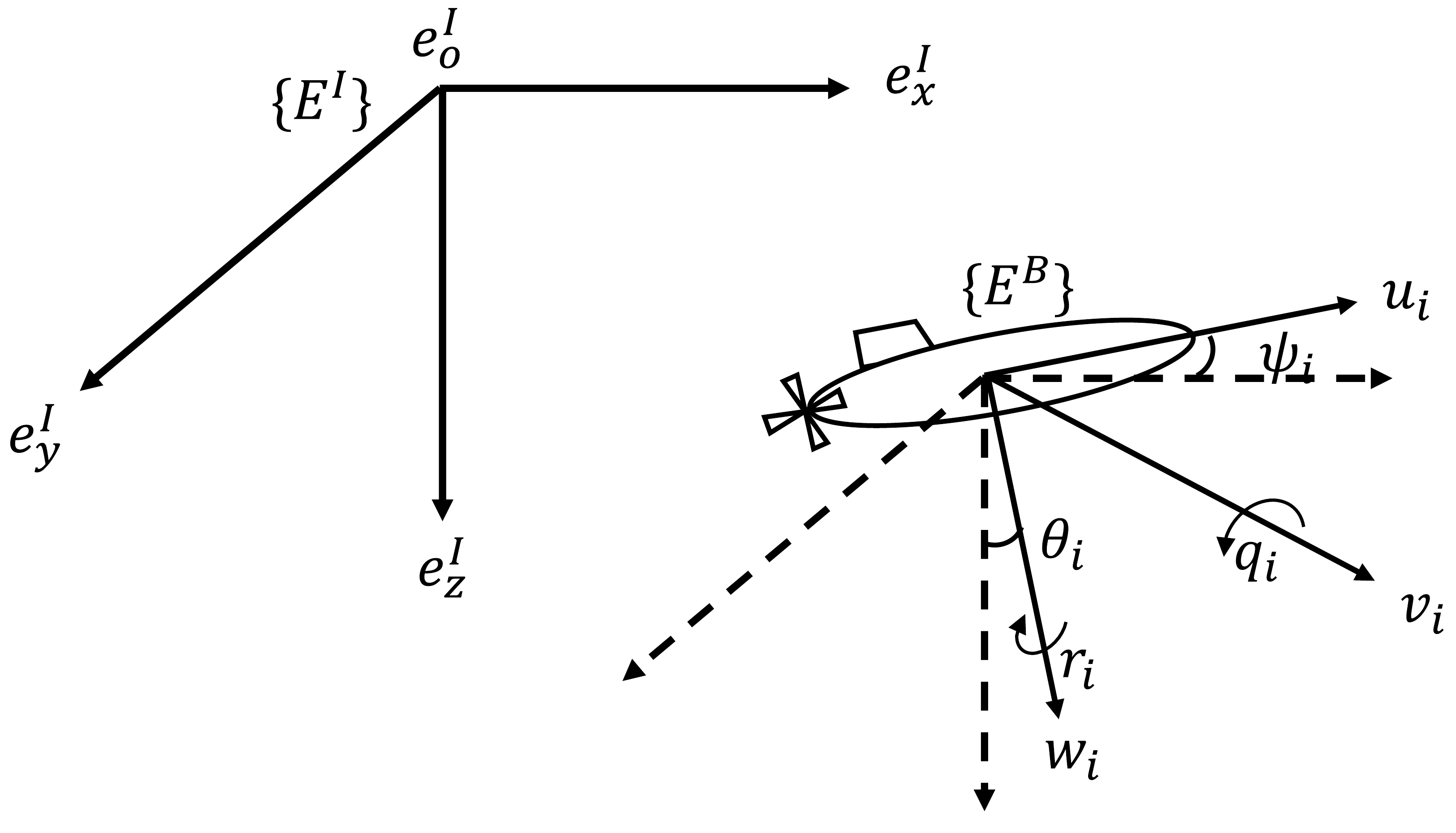}
\caption{Schematic diagram of $i$-th UUV .}
\label{fig1}
\end{figure}

\subsection{Dynamic model and problem formulation}
The distributed robust constrained formation tracking control of fleets of underactuated autonomous underwater vehicles in 3D space is addressed in this paper. First, following the work of Qi \textit{et al.} \cite{qi}, the kinematics of each underwater vehicle are described as
\begin{align}
    \dot x_i &= \cos{\theta _i} \cos{\psi _i} u_i - \sin{\psi _i} v_i + \sin{\theta _i} \cos{\psi _i} w_i, \nonumber \\
    \dot y_i &= \cos{\theta _i} \sin{\psi _i} u_i + \cos{\psi _i} v_i + \sin{\theta _i} \sin{\psi _i} w_i, \nonumber \\
    \dot z_i &= -\sin{\theta _i} u_i + \cos{\theta _i} w_i, \nonumber \\
    \dot \theta _i &= q_i ,\nonumber \\
    \dot \psi _i &= \frac{1}{\cos{\theta _i}} r_i , \label{eq1}
\end{align}
where $\eta _{i1} = [x_i, y_i, z_i]^{\rm T} \in \mathbb{R}^{3}$ and $\eta _{i2} = [\theta _i, \psi _i ]^{\rm T} \in \mathbb{R}^{2}$ represent the location and orientation of the $i$-th vehicle ($i \in \Gamma$), respectively, expressed in the earth-fixed frame $ E^{\rm I} = \left\{ { e_o^{\rm I}, e_x^{\rm I}, e_y^{\rm I}, e_z^{\rm I}} \right\}$, and $\nu _{i1} = [u_i, v_i, w_i]^{\rm T} \in \mathbb{R}^{3}$ and $\nu _{i2} = [q_i, r_i]^{\rm T} \in \mathbb{R}^{2}$ are the linear and angular velocities, respectively, which is expressed in the body-fixed frame $ E^{\rm B} = \left\{ { e_{o,i}^{\rm B}, e_{x,i}^{\rm B}, e_{y,i}^{\rm B}, e_{z,i}^{\rm B}} \right\}$, as shown in Fig. \ref{fig1}.

The dynamics of the $i$-th vehicle is modeled by
\begin{align}
    m_{i1} \dot{u}_i &= m_{i2} v_i r_i - m_{i3} w_i q_i - \beta _{ui} u_i + \tau _{i1} + d_{i1}, \nonumber \\
    m_{i2} \dot{v}_i &= - m_{i1} u_i r_i - \beta _{vi} v_i + d_{i2}, \nonumber \\
    m_{i3} \dot{w}_i &= m_{i1} u_i q_i - \beta _{wi} w_i + d_{i3}, \nonumber \\
    m_{i4} \dot{q}_i &= \left( { m_{i3} - m_{i1} }\right) u_i w_i - \beta _{qi} q_i - \beta _{bi} \sin{\theta _i}+ \tau _{i2} + d_{i4}, \nonumber \\
    m_{i5} \dot{r}_i &= \left( { m_{i1} - m_{i2} }\right) u_i v_i - \beta _{ri} r_i + \tau _{i3} + d_{i5}, \label{eq2}
\end{align}
where $m_{i1} = m_i - \beta _{\dot u i}$, $m_{i2} = m_i - \beta _{\dot v i}$, $m_{i3} = m_i - \beta _{\dot w i}$, $m_{i4} = I_{yi} - \beta _{\dot q i}$ and $m_{i5} = I_{zi} - \beta _{\dot r i}$; $m_i$ is the mass of the $i$-th vehicle; $I_{yi}$ and $I_{zi}$ are the moments of inertia around the axes of $ e_{y,i}^{\rm B}$ and $ e_{z,i}^{\rm B}$, respectively; $\beta _{\left(\cdot \right)}$ is a set of hydrodynamics related terms associated with the $i$-th vehicle. $\tau _i = \left[{\tau _{i1},\tau _{i2},\tau _{i3} }\right]^{\rm T} \in \mathbb{R}^{3}$ is the control input, and $d_i = \left[ {d_{i1},d_{i2},d_{i3},d_{i4},d_{i5} } \right]^{\rm T} \in \mathbb{R}^{5}$ is the unknown disturbance acting on the $i$-th vehicle.

\newtheorem{remark}{Remark}

\begin{remark} \label{rmk2}
It can be clearly seen from \eqref{eq2} that the velocities in sway and heave directions are underactuated; that is, these two degrees of freedom cannot be manipulated directly, and thus the control of such a system can be more challenging.
\end{remark}

To deal with the underactuated constraint, a spherical coordinate transformation \cite{li20203d} is introduced as follows for $i$-th UUV
\begin{align} \label{eq3__}
    u_{ia} &= \sqrt{u_i^2 + v_i^2 + w_i^2}, \nonumber \\
    \theta _{ia} &= \theta _i + \theta _i^\prime , \nonumber \\
    \psi _{ia} &= \psi _i + \psi _i^\prime ,
\end{align}
with
\begin{align}\label{eq4__}
\theta _i^\prime &= \arctan{\left({-w_i}/{\sqrt{u_i^2+v_i^2}}\right)}, \nonumber \\
\psi _i^\prime &= \arctan\left(v_i/u_i\right).
\end{align}
Since $u_{ia}$ is positive, it is easy to verify that $\theta _i^\prime$ and $\psi _i ^\prime$ are both well defined in the open interval $(-\pi/2, \pi/2)$. Applying the above transformation, the kinematics of UUV $i$ in \eqref{eq1} becomes
\begin{align}
    \dot  x_i &= u_{ia} \cos{\theta _{ia}} \cos{\psi _{ia}}, \nonumber \\
    \dot  y_i &= u_{ia} \cos{\theta _{ia}} \sin{\psi _{ia}}, \nonumber \\
    \dot  z_i &= - u_{ia} \sin{\theta _{ia}}, \nonumber \\
    \dot \theta _{ia} &= q_i + \dot \theta _i^\prime, \nonumber \\
    \dot \psi _{ia} &= r_i / \cos{\theta _i} + \dot \psi _i^\prime, \label{eq5__}
\end{align}
Clearly, the transformed variables $\left(u_{ia}, \theta _{ia}, \psi _{ia} \right)$ now are all fully actuated.

In the considered formation tracking problem, the desired geometric formation shape of the UUVs fleet can be determined by a set of relative deviations between the vehicles $i$ and $j$ ($i, j \in \Gamma$), and denote as $\Delta _{ij} = \left[ {\delta_{x,ij}, \delta_{y,ij}, \delta_{z,ij}  }\right]^{\rm T} \in \mathbb{R}^3$ where $\delta_{(\cdot),ij}$ is the relative deviation in a particular direction. In addition to the formation keeping, a prescribed reference trajectory requires to be tracked by the UUVs. Denote by $\eta _{i1}^d = \left[ {x_i^d, y_i^d,z_i^d}\right]^{\rm T} \in \mathbb{R}^3$ the desired 3D trajectory for each vehicle to track. Both variables $\Delta _{ij}$ and $\eta _{i1}^d$ will be given for a particular formation tracking task. We may have the following assumptions.
\begin{assumption} \label{asp2}
 The reference signals, i.e., $\eta _{i1}^d = \left[ {x_i^d, y_i^d,z_i^d}\right]^{\rm T}$ and its first and second derivatives $\dot \eta _{i1} ^ d =  \left[ {\dot x_i^d, \dot y_i^d, \dot z_i^d}\right]^{\rm T}$ and $\ddot \eta _{i1} ^ d =  \left[ {\ddot x_i^d, \ddot y_i^d, \ddot z_i^d}\right]^{\rm T}$, are all bounded for all time ($i \in \Gamma$).
\end{assumption}

\begin{assumption} \label{asp3}
There is some positive constant $\alpha _1$ such that the unknown disturbance $d_i$ enforced on $i$-th UUV is bounded by $ \left\| {d_i} \right\| \le  \alpha _1$.
\end{assumption}

Consider a multiple UUVs formation system where the motion of each individual vehicle is described by equations \eqref{eq1} and \eqref{eq2}, satisfying Assumptions \ref{asp2} and \ref{asp3}. The control objective of this paper is to provide a distributed robust constrained solution for each vehicle such that the following coordination motion can be achieved: 1) The desired formation shape (i.e., desired deviations $\Delta _{ij}$) can be formed and maintained by UUVs. 2) Besides, the UUV fleet can track a predefined trajectory together even in the presence of disturbances. 3) The restrictions in velocities and control inputs should be realized.

\section{Constrained Formation Tracking Control Protocol Design} \label{s3}
To achieve the preceding control requirements, this section addresses the consensus based formation tracking control problem for a fleet of underactuated underwater vehicles, in which all neighbors' information is considered. In particular, the control commands of each vehicle are optimized through an on-line motion optimization procedure so that the planned maneuvering actions could be ensured within a practical range, meanwhile realizing the required specifications.

\subsection{Distributed formation tracking controller}
Let us first define the consensus formation tracking error for $i$-th UUV $(i\in \Gamma)$ as follows
\begin{align} \label{eq4}
      e_{i} = \sum\limits_{j \in N_i} {a_{ij} \left(\eta _{i1}- \eta _{j1} -\Delta _{ij} \right) + b_i \left(\eta _{i1} -\eta _{i1} ^d \right)}, 
\end{align}
where the non-negative indicator $a_{ij}$ shows the information interactions between vehicle $i$ and its neighbors $j \in N_i$, and non-negative constant $b_i$ indicates whether or not the $i$-th vehicle can access the information of the reference trajectory. $\Delta _{ij}$ denotes the desired constant relative position between vehicles $i$ and $j$. Taking the time derivative of equation \eqref{eq4}, yield
\begin{align} \label{eq5}
    \dot e_{i} = \sum\limits_{j \in N_i} {a_{ij} \left(\dot \eta _{i1}- \dot \eta _{j1} \right) + b_i \left(\dot \eta _{i1} - \dot \eta _{i1} ^d \right)}.
\end{align}

Based on the above consensus error defined for each individual, we denote with $e = \left[ e_1^{\rm T},e_2^{\rm T}, \ldots, e_N^{\rm T} \right]^{\rm T} $the consensus fromation tracking error of overall UUVs formation system and its time derivative $ \dot e = \left[ \dot e_1^{\rm T},\dot e_2^{\rm T}, \ldots, \dot e_N^{\rm T} \right]^{\rm T} $. Since UUVs fleet moves as a whole, the desired velocities are the same (i.e., $\dot{\eta} _{i1} ^d = \dot{\eta}_{1} ^d$ for all $i \in \Gamma$).   Then, a set of equations \eqref{eq5} can be arranged into a compact form
\begin{equation}\label{eq6}
    \dot{e} = \left( { L+B }\right) \left( { \dot{\eta} - \text{1}_{N} \dot{\eta}_1 ^d } \right),
\end{equation}
where $\dot \eta _{1} = \left[ \dot \eta _{11}^{\rm T}, \ldots, \dot \eta _{N1}^{\rm T} \right] ^{\rm T}$, $\text{1}_{N} = \left[ 1, \ldots,  1\right] ^{\rm T}$, and matrices $L$ and $B$ describing the communication topology of the considered formation system are detailed in the previous section (see Section \ref{s2.1}). 

Recall that after transformation the $\eta _{i1}$-dynamics is governed by equation \eqref{eq5__}, and it is desirable to design control commands driving the consensus error $e$ to zero. To this end, define
\begin{align}
\rho _i = \begin{bmatrix} \rho _{ix} \\
                        \rho _{iy}\\
                        \rho_{iz} \end{bmatrix}
    = \begin{bmatrix} u_{ia}^{cmd} \cos{\theta _{ia} ^{cmd}} \cos{\psi _{ia} ^{cmd}}\\
                    u_{ia}^{cmd} \cos{\theta _{ia} ^{cmd}} \sin{\psi _{ia} ^{cmd}}\\
                     - u_{ia}^{cmd} \sin{\theta _{ia} ^{cmd}} \end{bmatrix}, \label{eq10__}
\end{align}
and a virtual control law is proposed for UUV $i$ as 
\begin{align}
    \rho _i = - K_{i1} e_i + \dot{\eta}_1^d, \label{eq11__}
\end{align}
where $K_{i1}\in \mathbb{R} ^ {3 \times 3}$ is a diagonal positive definite matrix. By \eqref{eq10__}, we then get the following control commands for UUV $i$:
\begin{align}
    u_{ia}^{cmd} &= \sqrt{\rho _{ix}^2 +\rho _{iy}^2 +\rho _{iz}^2 }, \nonumber\\
    \theta _{ia}^{cmd} &= -\arcsin{ \left(\frac{\rho _{iz}}  {u_{ia} ^{cmd}} \right)}, \nonumber\\
    \psi _{ia}^{cmd} &= \arcsin{ \left(\frac{\rho _{iy}}  {u_{ia} ^{cmd} \cos{\theta _{ia} ^{cmd}}}\right)}. \label{eq12__}
\end{align}
There may exist multiple results for the above angle commands. Observe that in many practical tasks it is unlikely to specify motions for UUVs such that the $\theta _i$ goes beyond $\pm{\pi/2}$, whereby the angle command $\theta^{cmd}_{ia}$ can select those values within the interval $\left(-\pi /2 , \pi/2\right)$. For the determination of $\psi ^{cmd}_{ia}$, based on the kinematic relation \eqref{eq10__}, we also have the expression $\tan{(\psi ^{cmd}_{ia})} = \rho _{iy} / \rho _{ix} $, by which the unique solution for $\psi ^{cmd}_{ia}$ can be determined.

On the basis of the virtual controller as proposed in \eqref{eq10__}--\eqref{eq12__}, the $\eta _{i1}$-dynamics can be rewritten as
\begin{align}
    \dot{ \eta}_{i1} = - K_{i1} e_i + \dot{\eta}_1^d + \sigma _i, \label{eq13__}
\end{align}
with 
\begin{align}
    \sigma _ i = \begin{bmatrix} u_{ai} \cos{\theta _{ai}} \cos{\psi _{ai}}- u_{ai}^{cmd} \cos{\theta _{ai} ^{cmd}} \cos{\psi _{ai} ^{cmd}}\\
    u_{ai} \cos{\theta _{ai}} \sin{\psi _{ai}} -u_{ai}^{cmd} \cos{\theta _{ai} ^{cmd}} \sin{\psi _{ai} ^{cmd}} \\
    - u_{ai} \sin{\theta _{ai}} + u_{ai}^{cmd} \sin{\theta _{ai} ^{cmd}} \end{bmatrix}. \label{eq14__}
\end{align}
Substituting equation \eqref{eq13__} into \eqref{eq6}, yield
\begin{align}
     \dot{e} = \left( { L+B }\right) \left( { - K_1 e  + \sigma } \right), \label{eq15__}
\end{align}
where $K_1 =\text{diag}(K_{11}, \ldots, K_{N1})$ and $\sigma = [\sigma _1, \ldots, \sigma _N]^{\rm T}$.

We then state the following stability properties.
\begin{lemma} \label{lm2}
The overall error dynamics of multiple UUVs consensus formation tracking as described in \eqref{eq15__}, obtained by applying the protocol \eqref{eq11__} and \eqref{eq12__}, is input-to-state stable with respect to $\sigma$.
\end{lemma}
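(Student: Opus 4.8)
The plan is to exploit the fact that the closed-loop error dynamics \eqref{eq15__} are linear and time-invariant in the pair $(e,\sigma)$. Writing $A = -(L+B)K_1$ and $G = L+B$, equation \eqref{eq15__} reads $\dot e = A e + G\,\sigma$, so by the standard characterization of ISS for linear systems the claim reduces to two sub-tasks: (i) showing that $A$ is Hurwitz, so that the unforced system is exponentially stable; and (ii) converting this into an explicit ISS estimate with a class-$\mathcal{K}$ gain in $\sup_{s\le t}\|\sigma(s)\|$.

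The first and crucial sub-task is to prove that $A=-(L+B)K_1$ is Hurwitz. The subtlety is that, although $L+B$ is positive definite by Lemma \ref{lemma1} and $K_1=\mathrm{diag}(K_{11},\ldots,K_{N1})$ is positive definite, their product need not be symmetric and the two factors need not commute, so positive definiteness of the factors does \emph{not} by itself place the spectrum of the product in the correct half-plane. To bypass this, I would use that $K_1$ is a (block-)diagonal symmetric positive definite matrix, whence its square root $K_1^{1/2}$ is well defined and symmetric. The similarity transform $K_1^{1/2} A K_1^{-1/2} = -K_1^{1/2}(L+B)K_1^{1/2}$ leaves the spectrum unchanged, and the symmetric part of $K_1^{1/2}(L+B)K_1^{1/2}$ equals $K_1^{1/2}\,\mathrm{sym}(L+B)\,K_1^{1/2}$, where $\mathrm{sym}(\cdot)$ denotes the symmetric part; this matrix is positive definite by Lemma \ref{lemma1} together with the congruence. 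Since any matrix whose symmetric part is positive definite has all eigenvalues with strictly positive real part, it follows that $-(L+B)K_1$ is Hurwitz.

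For the second sub-task I would build an ISS-Lyapunov function. With $A$ Hurwitz, for any positive definite $Q$ the Lyapunov equation $A^{\rm T}P+PA=-Q$ admits a unique positive definite solution $P$; taking $V=e^{\rm T}Pe$ and differentiating along \eqref{eq15__} gives $\dot V = -e^{\rm T}Qe + 2e^{\rm T}PG\,\sigma \le -\lambda_{\min}(Q)\|e\|^2 + 2\|PG\|\,\|e\|\,\|\sigma\|$. Splitting off a fraction $\kappa\in(0,1)$ of the negative term shows that whenever $\|e\|\ge 2\|PG\|\,\|\sigma\|/\big((1-\kappa)\lambda_{\min}(Q)\big)$ one has $\dot V \le -\kappa\lambda_{\min}(Q)\|e\|^2<0$. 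This is precisely the ISS-Lyapunov inequality, and together with $\lambda_{\min}(P)\|e\|^2\le V\le\lambda_{\max}(P)\|e\|^2$ it yields an estimate of the form $\|e(t)\|\le \beta(\|e(0)\|,t)+\gamma\big(\sup_{s\le t}\|\sigma(s)\|\big)$ with $\beta$ of class $\mathcal{KL}$ and a linear gain $\gamma$, establishing input-to-state stability with respect to $\sigma$.

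I expect the Hurwitz step to be the main obstacle, since the product of two positive definite matrices is generically neither symmetric nor spectrum-positive; the congruence/similarity reduction through $K_1^{1/2}$ is what makes the argument go through, and it relies essentially on $K_1$ being symmetric positive definite (which holds here, as $K_1$ is diagonal). As a sanity check, if $L+B$ is additionally symmetric the proof collapses: the weighted candidate $V=\tfrac12 e^{\rm T}(L+B)^{-1}e$ gives directly $\dot V=-e^{\rm T}K_1 e+e^{\rm T}\sigma$, from which the ISS bound with gain proportional to $\lambda_{\min}(K_1)^{-1}$ follows immediately.
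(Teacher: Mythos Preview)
Your proposal is correct, and the paper's proof is exactly the short route you sketch in your closing ``sanity check'': it sets $V_1=\tfrac12\, e^{\rm T}(L+B)^{-1}e$, differentiates along \eqref{eq15__} to obtain $\dot V_1=-e^{\rm T}K_1 e+e^{\rm T}\sigma\le -\underline k_1\|e\|^2+\|e\|\,\|\sigma\|$ with $\underline k_1=\lambda_{\min}(K_1)$, and reads off the ISS bound directly (negative definite once $\|e\|\ge \|\sigma\|/\varepsilon_1$ for any $0<\varepsilon_1<\underline k_1$). The paper therefore bypasses both your Hurwitz step and the Lyapunov-equation construction entirely. What your longer route buys is robustness to asymmetry of $L+B$: the identity $\dot V_1=-e^{\rm T}K_1 e+e^{\rm T}\sigma$ as written in the paper tacitly uses $(L+B)^{\rm T}(L+B)^{-1}=I$, i.e., that $L+B$ is symmetric, whereas your similarity argument through $K_1^{1/2}$ needs only the symmetric part of $L+B$ to be positive definite --- which is the natural reading of Lemma~\ref{lemma1} for directed communication graphs. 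In short, the paper's argument is the quick symmetric-case computation; yours is a genuine generalization of it.
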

\begin{proof} \label{pf1}
 Propose the following Lyapunov function candidate 
 \begin{align} \label{eq13}
     V_1 = \frac{1}{2} e^{\rm T}(L+B)^{-1} e.
 \end{align}
It is clear from Lemma \ref{lemma1} that the proposed Lyapunov function is valid. Then, taking the time derivative of $V_1$, along  the error dynamics \eqref{eq15__}, and applying again the Lemma \ref{lemma1} we obtain
 \begin{align}\label{eq16}
     \dot V_1 & = - e^{\rm T} K_1 e + e^{\rm T} \sigma \nonumber \\
            & \le - \underline{k}_1 \| e \|^2 + \| e\| \| \sigma  \| \nonumber \\
            & \le - ( \underline{k}_1 - \varepsilon _1) \| e \| ^2 \quad \text{whenever} \quad \| e \| \ge \mu _1,
 \end{align}
 where $\underline{k}_1$ is the minimum eigenvalue of $K_1$, $\varepsilon _1$ an arbitrary number within the interval $(0, \underline{k}_1)$, and $\mu _1 = (1/ \varepsilon _1) \|\sigma \|$.
 
 It is immediate from \eqref{eq16} that the error dynamics \eqref{eq15__} is input-to-state stable with respect to the input $\sigma$, and in particular if $\| \sigma \| \to 0$ as $t \to  \infty$, then the origin of the error system is asymptotically stable. This completes the proof.\end{proof}
 
\begin{remark} \label{rmk3}
It is noted that the virtual control law $\rho _i$ designed for UUV $i$, as seen in \eqref{eq11__}, uses simply the information from its neighbors, and therefore the resulting formation protocol is said to be fully distributed.
\end{remark}

\subsection{On-Line Motion Optimization Procedure}
It is shown above that the virtual control law proposed can lead to stable consensus formation tracking so long as the error of control commands can be made bounded. It is, however, inevitable to tune the virtual control gain $K_i$ carefully so as to meet a satisfactory consensus formation tracking performance, and besides once the control gain is determined it cannot be changed in all future time. These features may greatly restrict the performance of the present formation plan in reality. To relax it, this subsection develops an on-line optimization procedure such that the control gain $K_i$ can be optimized automatically with respect to a certain performance index; meanwhile, the constraints on UUVs' velocities can be fulfilled to improve the efficacy and security of the resulting optimal control actions.

The optimal virtual control gain of $i$-th UUV can be obtained by solving the following constrained minimization problem at sampling time instant $t_k $, $(t_k > 0)$:
\begin{align}
    \min _{K_{i1,1}} \ J_i &= e_{i,1} ^{\rm T} Q e_{i,1} + \rho _{i,1}^{\rm T} R_1 \rho _{i,1} +  p_1 ({k}_{i1,1}-{k}_{i1,0})^{2}  \nonumber \\
    &\quad + p_2 ({k}_{i2,1}-{k}_{i2,0})^{2}+ p_3 ({k}_{i3,1}-{k}_{i3,0})^{2}  \nonumber\\
    &\quad + (\rho _{i,1}- \rho _{i,0})^{\rm T} R_2 (\rho _{i,1}-\rho _{i,0}) \label{eq17}\\
     \text{s.t.} \quad & e_{i,1} = \sum\limits_{j \in N_i}({\eta} _{i1,1} - {\eta} _{j1,1}-\Delta_{ij}) \nonumber \\
                    &\qquad \qquad +b_i( \eta _{i1,1} -\eta _{i1,1}^d) \label{eq18}\\
                    & {\eta} _{i1,1} = \eta _{i,0} +  \rho _{i,1} \Delta t \label{eq19}\\
                    & \rho _{i,1} = -K_{i1,1} e_{i,0} + \dot{\eta}_{1,0}^d  \label{eq20}\\
                    &\qquad \underline{\rho}_{i} \le \rho _{i,1} \le \bar \rho_{i}  \label{eq21}\\
                    &\qquad   -K_{i1,1} \prec 0  \label{eq22} \\
                    & \eta _{i1,0} = \eta _{i1} (t_k), \dot{\eta}_{1,0}^d=  \dot{\eta} _{1}^d (t_k), {\eta} _{j1,1} =  \eta _{j1} (t_k) \nonumber  \\
                    & e_{i,0} = e_{i} (t_k), \eta _{i1,1}^d=  \eta _{i1}^d(t_k), \rho _{i,0} = \rho _{i}(t_k) \nonumber\\
                    & {k}_{ij,0} = {k}_{ij}(t_k) , \ (j = 1,2,3), \label{eq23}
\end{align}
where $Q$, $R_1$, $R_2\in \mathbb{R}^{3\times3}$ and $p_1$, $p_2$, $p_3 >0$ are weighting parameters of the suggested quadratic objective function $J_i$; $K_{i1,1} = \text{diag}(k_{i1,1},k_{i2,1},k_{i3,1})$ is the virtual control gain to be solved; $\Delta t$ is the sampling period; $\underline{\rho}_{i}$, $\bar{\rho}_{i} \in \mathbb{R}^{3} $ are constant vectors used to restrict the speed of the vehicle; $\eta _{i1} (t_k)$, $\dot{\eta} _{1}^d (t_k)$, $\eta _{j1} (t_k)$, $e_{i} (t_k)$, $\eta _{i1}^d(t_k)$, $\rho _{i}(t_k)$ and $ {k}_{ij}(t_k)$ are the initial values of the optimization problem, all of which are sampled at the time instant $t_k $. Notice that in this procedure some additional terms can be tailored in the objective function $J_i$ to achieve extra goals like obstacles and collision avoidance; for example, such a term can be designed as the reciprocal of the distances between UUV and obstacles.

\begin{remark} 
It should be noted that the above constrained minimization problem as in \eqref{eq17}--\eqref{eq23} is implemented in a real-time manner. That is, at each sampling time instant $t_k \ (k > 0)$, based on the measured current states and last optimizing results, the minimization problem is solved independently by each vehicle to yield the optimal solution $K_{i1,1}^\star$. Using this gain, the current optimal control commands can be acquired by applying equations \eqref{eq11__} and \eqref{eq12__}.  At the next sampling time instant $t_{k+1}$, this procedure is repeated, and a new optimal solution will be calculated. This way, the control policy could be optimized dynamically in order to reach the best performance.
\end{remark}

\begin{remark} \label{rmk4}
In designed constrained motion optimization, an approximate predictive model, shown in \eqref{eq19} with sampling time period $\Delta t> 0$, is used so as to be able to generate the one-step ahead predicted trajectories which are being optimized in terms of the consensus error, energy consumption as well as the smoothness of both processes of control and dynamic optimization, as the performance index suggested. As such, the resulting control policy could be more efficient, adaptive and consistent than the previous one \eqref{eq11__} where the control gain $K_{i1}$ is commonly determined by trial-and-error. Besides, the restrictions on UUV's velocities can be fulfilled as well by the constraint \eqref{eq21}, which effectively ensures the practicality and security of the proposed scheme.
\end{remark}

\begin{remark} \label{rmk5}
The feasibility of the constrained optimization problem \eqref{eq17}--\eqref{eq23} is straightforward, as there are no state constraints imposed on UUVs. It is also clear that the stability of the resulting optimized control commands can be guaranteed, which is attributed to the introduced constraint \eqref{eq22}. This follows directly from the result of Lemma \ref{lm2}. Most importantly, unlike the framework\cite{shen2017trajectory}, since the presented problem is fully convex (i.e., both objective function and constraints are convex), there exist highly efficient programming methods (e.g., interior point methods) to solve it without affecting the real-time capability.
\end{remark}

\section{Bioinspired Robust Controller Design} \label{s4}
The constrained consensus formation tracking problem of UUVs fleet is addressed in the previous section. In particular, distributed optimal control commands are derived for each UUV by on-line solving a constrained minimization problem, and the stability and flexibility of the developed optimization problem are clarified. This section investigates the robust dynamic control of underactuated underwater vehicles so that the derived optimal control commands can be realized effectively even in the presence of various unknown marine disturbances.

\subsection{Backstepping Design Procedure}
To address the underactuation issue, this section employs the backstepping design procedure, in which two auxiliary virtual controllers are defined to help design the final control laws. Before the derivations, we define the following error variables
\begin{align}
    \tilde q_i &=  q_i - q_i^{cmd}, \label{eq24} \\ 
    \tilde r_i &=  r_i - r_i^{cmd} , \label{eq25}
\end{align}
where $q_i$ and $r_i$ are the actual body frame angular velocities of UUVs in the yaw and pitch directions, respectively, and $q_i^{cmd}$ and $r_i^{cmd}$ are corresponding two virtual control commands to be designed.

Based on the relations introduced in \eqref{eq24} and \eqref{eq25}, the $(\theta _{ia}, \psi _{ia})$-dynamics in \eqref{eq5__}, become
\begin{align}
    \dot{\theta}_{ia} &= q_i^{cmd} + \tilde{q}_i + \dot{\theta}_i^\prime, \label{eq26}\\
    \dot{\psi}_{ia} &= \left(r_i^{cmd} + \tilde{r}_i\right) / \cos{\theta _i} + \dot{\psi}_i^\prime. \label{eq27}
\end{align}
The virtual controllers now are designed as follows
\begin{align}
    q_i ^{cmd} &= -k_{i\theta} \left({\theta _{ia} - \theta _{ia}^{cmd} }\right) - \dot{\theta _i ^{\prime}} + \dot{\theta} _{ia}^{cmd}, \label{eq28}\\
    r_i^{cmd} &= \cos{\theta _i} \left[ { -k_{i\psi}\left(\psi _{ia} - \psi _{ia}^{cmd}\right) -\dot{\psi}_{i}^{\prime} +\dot{\psi}_{ia}^{cmd}} \right], \label{eq29}
\end{align}
where $k_{i \theta}$ and $k_{i\psi}$ are some positive constants. Let $\tilde \theta _{ia} =\theta _{ia} - \theta _{ia}^{cmd}$, $\dot {\tilde \theta}_{ia} = \dot \theta _{ia} - \dot \theta _{ia}^{cmd}$, $\tilde \psi_{ia} = \psi _{ia} - \psi _{ia}^{cmd}$ and $\dot {\tilde \psi}_{ia} =\dot \psi _{ia} - \dot \psi _{ia}^{cmd}$. Substituting the virtual control laws \eqref{eq28} and \eqref{eq29} into equations \eqref{eq26} and \eqref{eq27}, respectively, yield
\begin{align}
    \dot{\tilde \theta}_{ia} &= - k_{i\theta} \tilde \theta _{ia} + \tilde q_i, \label{eq30} \\
    \dot{\tilde \psi}_{ia} &= - k_{i\psi} \tilde \psi _{ia} +  \tilde r_i / \cos{\theta _i}. \label{eq31}
\end{align}

Taking the time derivatives in \eqref{eq24} and \eqref{eq25}, together with the dynamics of $q_i$ and $r_i$ shown in \eqref{eq2}, we obtain
\begin{align}
    \dot{\tilde{q}}_i &= \left(\frac{1}{m_{i4}}\right) \left[\left( { m_{i3} - m_{i1} }\right) u_i w_i - \beta _{qi} q_i - \beta _{bi} \sin{\theta _i} \right. \nonumber \\
    & \qquad \qquad \qquad\left. + \tau _{i2} + d_{i4} \right] - \dot q_i^{cmd} , \label{eq32}  \\
    \dot{\tilde{r}}_i &=  \left(\frac{1}{m_{i5}}\right) \left[\left( { m_{i1} - m_{i2} }\right) u_i v_i - \beta _{ri} r_i + \tau _{i3} + d_{i5}\right] - \dot r_i^{cmd}. \label{eq33}
\end{align}

Let $\tilde u_{ia} = u_{ia} - u_{ia}^{cmd}$ and $\dot{\tilde u}_{ia} = \dot{u}_{ia} - \dot{u}_{ia}^{cmd}$. Due to the $(u_i, v_i, w_i)$-dynamics in \eqref{eq2} as well as transformation \eqref{eq3__}, we have
\begin{align}
    \dot{\tilde u}_{ia} &=  \frac{\cos{\theta _i^\prime} \cos{\psi _i^\prime}}{m_{i1}} \left( m_{i2} v_i r_i - m_{i3} w_i q_i - \beta _{ui} u_i + \tau _{i1} \right) \nonumber \\
    &\quad - \frac{\cos{\theta _i^\prime} \sin{\psi _i^\prime}}{m_{i2}} \left( m_{i1}u_{i}r_{i} + \beta _{vi} v_i \right) \nonumber \\
    & \quad- \frac{\sin{\theta _i^\prime}}{m_{i3}} \left( {m_{i1} u_i q_i - \beta _{wi}w_i}\right) - \dot{u}_{ia}^{cmd} + \bar{d}_{i1}, \label{eq34}
\end{align}
where
\begin{align*}
    \bar d_{i1} &= \frac{\cos{\theta _i^\prime} \cos{\psi _i^\prime}}{m_{i1}} d_{i1} + \frac{\cos{\theta _i^\prime} \sin{\psi _i^\prime}}{m_{i2}} d_{i2} + \frac{\sin{\theta _i^\prime}}{ m_{i3}} d_{i3}.
\end{align*}
The goal now is to seek control laws for $\tau _{i1}$, $\tau _{i2}$ and $\tau _{i3}$ such that the error variables $\tilde \theta _{ia}$, $\tilde \psi _{ia}$, $\tilde q_i$, $\tilde r_i$ and $\tilde u_{ia}$, governed by equations \eqref{eq30}--\eqref{eq34}, can be brought to the origins. We provide the following lemma to achieve this purpose.

\begin{lemma} \label{lm4}
Consider the error system described by \eqref{eq30}--\eqref{eq34} with Assumption \ref{asp3} being satisfied. The resulting closed-loop error system is input-to-state stable, if the following control laws are employed
\begin{align}
    \tau _{i1} &=  -m_{i2} v_i r_i + m_{i3} w_i q_i + \beta _{ui} u_i + \frac{ {m_{i1}}} {\cos{\theta _i^\prime} \cos{\psi _i^\prime}} \left( -k_{iu} \tilde u_{ia} \right. \nonumber \\
     &\quad \left. + \dot{u}_{ia}^{cmd} + \tau _{i1}^*\right),  \label{eq35}\\
     \tau _{i1}^*  &= \frac{\cos{\theta _i^\prime} \sin{\psi _i^\prime}}{m_{i2}} \left( m_{i1}u_{i}r_{i} + \beta _{vi} v_i \right) + \frac{\sin{\theta _i^\prime}}{m_{i3}} \left( m_{i1} u_i q_i \right. \nonumber\\
     &\quad \left. - \beta _{wi}w_i\right), \label{eq35_} \\
    \tau _{i2} &=  - \left( { m_{i3} - m_{i1} }\right) u_i w_i + \beta _{qi} q_i + \beta _{bi} \sin{\theta _i} - k_{iq} {m_{i4}} \tilde q_i \nonumber \\
     & \quad  -{m_{i4}}\tilde{\theta}_{ia} + {m_{i4}} \dot{q}_i^{cmd}, \label{eq36} \\
    \tau _{i3} &=  - \left( { m_{i1} - m_{i2} }\right) u_i v_i + \beta _{ri} r_i  - k_{ir} {m_{i5}} \tilde r_i  \nonumber \\
    &\quad -{m_{i5}} \tilde{\psi}_{ia} / \cos{\theta _i} + {m_{i5}} \dot{r}_i^{cmd}, \label{eq37}
\end{align}
with $k_{iu}$, $k_{iq}$, $k_{ir}$ being some positive constants.
\end{lemma}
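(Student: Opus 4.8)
The plan is to substitute the proposed control laws into the error dynamics, thereby collapsing them into a simple interconnected system, and then to establish input-to-state stability through a single quadratic Lyapunov function whose coupling terms are designed to cancel. First I would insert $\tau_{i1}$ together with the auxiliary term $\tau_{i1}^*$ from \eqref{eq35}--\eqref{eq35_} into the $\tilde u_{ia}$-dynamics \eqref{eq34}. The feedforward components $-m_{i2}v_ir_i+m_{i3}w_iq_i+\beta_{ui}u_i$ are constructed precisely to cancel the drift inside the first bracket of \eqref{eq34}, so that this bracket reduces to $-k_{iu}\tilde u_{ia}+\dot u_{ia}^{cmd}+\tau_{i1}^*$; then $\tau_{i1}^*$ annihilates the two remaining coupling terms (those scaled by $1/m_{i2}$ and $1/m_{i3}$) while the injected $\dot u_{ia}^{cmd}$ cancels the feedforward already present in \eqref{eq34}. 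The channel therefore reduces to $\dot{\tilde u}_{ia}=-k_{iu}\tilde u_{ia}+\bar d_{i1}$.

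Next I would substitute \eqref{eq36} into \eqref{eq32} and \eqref{eq37} into \eqref{eq33}; in both cases the model nonlinearities, the restoring-type term $\beta_{bi}\sin\theta_i$, and the feedforward $\dot q_i^{cmd}$, $\dot r_i^{cmd}$ cancel, leaving $\dot{\tilde q}_i=-k_{iq}\tilde q_i-\tilde\theta_{ia}+d_{i4}/m_{i4}$ and $\dot{\tilde r}_i=-k_{ir}\tilde r_i-\tilde\psi_{ia}/\cos\theta_i+d_{i5}/m_{i5}$. Together with the kinematic relations \eqref{eq30}--\eqref{eq31}, the closed loop is thus a linear time-varying interconnection driven only by the bounded disturbance channels $\bar d_{i1}$, $d_{i4}/m_{i4}$, and $d_{i5}/m_{i5}$.

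I would then collect the aggregate error $\tilde\xi_i=[\tilde\theta_{ia},\tilde\psi_{ia},\tilde q_i,\tilde r_i,\tilde u_{ia}]^{\rm T}$ and take the Lyapunov candidate $V_i=\tfrac12\|\tilde\xi_i\|^2$. The decisive observation is that the backstepping construction makes the cross-coupling cancel in pairs: the $+\tilde\theta_{ia}\tilde q_i$ produced by \eqref{eq30} is annihilated by the $-\tilde q_i\tilde\theta_{ia}$ appearing in the reduced $\tilde q_i$-channel, and likewise $+\tilde\psi_{ia}\tilde r_i/\cos\theta_i$ cancels $-\tilde r_i\tilde\psi_{ia}/\cos\theta_i$ regardless of the value of $\cos\theta_i$. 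What survives is $\dot V_i=-k_{i\theta}\tilde\theta_{ia}^2-k_{i\psi}\tilde\psi_{ia}^2-k_{iq}\tilde q_i^2-k_{ir}\tilde r_i^2-k_{iu}\tilde u_{ia}^2+\tilde q_i\,d_{i4}/m_{i4}+\tilde r_i\,d_{i5}/m_{i5}+\tilde u_{ia}\bar d_{i1}$.

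Finally I would bound the residual disturbance terms. By Assumption \ref{asp3} each $d_{ij}$ is bounded by $\alpha_1$; since $\theta_i^\prime$ and $\psi_i^\prime$ lie in $(-\pi/2,\pi/2)$ the trigonometric coefficients in $\bar d_{i1}$ are bounded while $m_{i1},m_{i2},m_{i3}$ are positive constants, so all three disturbance terms are uniformly bounded by some $\bar\alpha_i$. Writing $\underline k_i=\min\{k_{i\theta},k_{i\psi},k_{iq},k_{ir},k_{iu}\}$ gives $\dot V_i\le -\underline k_i\|\tilde\xi_i\|^2+\bar\alpha_i\|\tilde\xi_i\|$, which, exactly as in the proof of Lemma \ref{lm2}, yields $\dot V_i\le-(\underline k_i-\varepsilon)\|\tilde\xi_i\|^2$ whenever $\|\tilde\xi_i\|\ge\bar\alpha_i/\varepsilon$ for any $\varepsilon\in(0,\underline k_i)$, establishing ISS with respect to the disturbance. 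I expect the main obstacle to be bookkeeping rather than conceptual: verifying that $\tau_{i1}$ and $\tau_{i1}^*$ jointly cancel every drift and coupling term in the crowded $\tilde u_{ia}$-channel, and confirming that the potentially singular factor $1/\cos\theta_i$ never reaches the final bound because the coupling terms carrying it cancel before $\dot V_i$ is assembled.
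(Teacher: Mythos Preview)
Your proposal is correct and follows essentially the same route as the paper: substitute the control laws to obtain the reduced dynamics $\dot{\tilde u}_{ia}=-k_{iu}\tilde u_{ia}+\bar d_{i1}$, $\dot{\tilde q}_i=-k_{iq}\tilde q_i-\tilde\theta_{ia}+d_{i4}/m_{i4}$, $\dot{\tilde r}_i=-k_{ir}\tilde r_i-\tilde\psi_{ia}/\cos\theta_i+d_{i5}/m_{i5}$, take the quadratic Lyapunov function $V=\tfrac12\|\tilde\xi_i\|^2$, observe the pairwise cancellation of the cross terms (including the ones carrying $1/\cos\theta_i$), and conclude ISS via $\dot V\le -\underline k_i\|\tilde\xi_i\|^2+\gamma_i\alpha_1\|\tilde\xi_i\|$. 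The only cosmetic difference is that the paper packages the disturbance bound as $\gamma_i\alpha_1$ with $\gamma_i=\sup\{1/m_{i4},1/m_{i5},1\}$ rather than a generic $\bar\alpha_i$.
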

\begin{proof} \label{pf4}
Plugging the proposed control laws in \eqref{eq35}--\eqref{eq37} into the equations \eqref{eq32}--\eqref{eq34}, respectively, result in the below $(\tilde{q}_i,\tilde{r}_i,\tilde{u}_{ia})$-error dynamic system
\begin{align}
    \dot{\tilde{q}}_i &= -k_{iq} \tilde q_i - \tilde{\theta}_{ia}+ \frac{1}{m_{i4}} d_{i4}, \label{eq38}\\
    \dot{\tilde{r}}_i &= -k_{ir} \tilde r_i - \tilde{\psi}_{ia}/\cos{\theta _i}  + \frac{1}{m_{i5}} d_{i5}, \label{eq39} \\
    \dot{\tilde{u}}_{ia} &= -k_{iu} \tilde u_{ia} +  \bar{d}_{i1} .\label{eq40}
\end{align}
We propose the Lyapunov function candidate as
\begin{align}
    V_{i2} = \frac{1}{2} \tilde{\theta}_{ia}^2 + \frac{1}{2} \tilde{\psi}_{ia}^2 + \frac{1}{2} \tilde{r}_i^2 + \frac{1}{2} \tilde{q}_i^2 +\frac{1}{2} \tilde{u}_{ia}^2, \label{eq41} 
\end{align}
and employing the time derivative of $V_{i2}$ along the trajectories of error system in \eqref{eq30}, \eqref{eq31} and \eqref{eq38}--\eqref{eq40} yields
\begin{align}
    \dot V_{i2} &=  -k_{i\theta} \tilde{\theta}_{ia}^2 -k_{i\psi} \tilde{\psi}_{ia}^2 -k_{ir} \tilde{r}_i^2 -k_{iq} \tilde{q}_i^2 -k_{iu} \tilde{u}_{ia}^2 \nonumber \\
    & \quad  + \frac{1}{m_{i4}} \tilde q_i d_{i4} + \frac{1}{m_{i5}} \tilde r_i d_{i5} + \tilde u_{ia} \bar{d}_{i1}.\label{eq42}
\end{align}
Let  $\zeta _i = [\tilde \theta _{ia}, \tilde \psi _{ia}, \tilde r_i, \tilde q_i, \tilde u_{ia}]^{\rm T}$ and $\tilde{d}_i = \left[0,0, d_{i4}, d_{i5}, \bar{d}_{i1} \right]^{\rm T}$. The equation \eqref{eq42} becomes
\begin{align}
    \dot V_{i2} & \le - \underline{k}_i \|\zeta _i\|^2 + \gamma _i \|\zeta _i\|  \| \tilde{d}_i\|,
\end{align}
with 
\begin{align*}
    \underline{k}_i &= \inf \left\{  k_{i\theta}, k_{i\psi}, k_{ir}, k_{iq}, k_{iu} \right\}, \\
    \gamma _i&= \sup \left\{ \frac{1}{{m_{i4}}}, \frac{1}{{m_{i5}}}, 1\right\}.
\end{align*}
Applying Assumption \ref{asp3}, obtain
\begin{align}
    \dot V_{i2} & \le - \underline{k}_i \|\zeta _i\|^2 + \gamma _i \alpha _1  \|\zeta _i\| \nonumber \\
    & \le - \left( \underline{k}_i - \lambda _1 \right) \|\zeta _i\|^2 \quad  \text{whenever}\quad \|\zeta _i \|\ge \frac{\gamma _i \alpha _1}{\lambda _1}, \label{eq42__}
\end{align}
where $ 0 <\lambda _1 < \underline{k}_i$. This completes the proof.
\end{proof}

\begin{remark} \label{rmk6}
It is observed that in standard backstepping procedure while the stability of the closed-loop system can be guaranteed by the Lemma \ref{lm4}, the derived controllers \eqref{eq35}-\eqref{eq37} rely on the time derivatives of the virtual commands introduced in \eqref{eq28} and \eqref{eq29}. Therefore, the implementation of such control laws may become much complicated with many terms resulting from the differential operation, also known as the issue of "explosion of terms", which poses a great difficulty on applications of backstepping control design. \end{remark}

\begin{remark} \label{rmk7}
In many simulational studies, it is not uncommon to use a numerical difference to approximate the analytical one to simplify the backstepping control realization. However, such a numerical operation can be rather vulnerable to the noise which exists ubiquitously in real processes and, thus, may lead to undesired behavior or even instability of the overall formation system.
\end{remark}

\begin{remark} \label{rmk8}
It is also clear from the properties of $\dot V_{i2}$ as in \eqref{eq42__} that the robustness of the backstepping controller against the disturbances is mainly dependent of the selection of the corresponding control gains (i.e., $k_{i\theta}$, $k_{i\psi}$, $k_{iu}$, $k_{ir}$ and $k_{iq}$). Consequently, due to the cascade connection generated by the backstepping procedure, it is easier to result in a high-gain controller, and then more likely to wind up the actuators in practice.
\end{remark}
Based on the observations, it is necessary to remedy the above backstepping control laws so that more practical and efficient controllers can be synthesized while the nice robustness properties can be obtained without the employment of such high-gain control laws.

\subsection{Neural Dynamics-Based Robust Control Design}
To overcome the aforementioned challenges, a bioinspired solution is provided to improve the robustness properties of the conventional backstepping controllers, and it is noticed that in order to avoid the "explosion of terms", the time derivatives of the auxiliary variables, i.e., $\dot{q}_i^{cmd}$ and $\dot{r}_i^{cmd}$, are regarded as the disturbances in the sequel and counteracted by the introduced neurodynamics model.

Shunting model as one of the biologically inspired neurodynamics models was initially proposed to describe the behavior of neurons in membrane with stimulus. By virtue of its beneficial properties, it has been extensively employed to develop bio-driven autonomous systems. The original shunting model of $i$-th neuron can be described by the following switching nonlinear differential equation
\begin{align} \label{eq43}
    {\dot x_i} =  - {a_i}{x_i} + \left( {{b_i} - {x_i}} \right)s_i^ +  - \left( {{b^\prime_i} + {x_i}} \right)s_i^ - ,
\end{align}
where $x_i$ represents the $i$-th neuron activities, $s_i ^+$ and $s_i ^-$ capture the environmental excitatory and inhibitory inputs, respectively, and $a_i$, ${b}_i$ and $b^\prime_i$ are some positive coefficients. It is noted that when the input $s_i$ to \eqref{eq43} is non-negative, $s_i ^+ = s_i$ and $s_i ^- = 0$; otherwise, $s_i ^+ = 0$ and $s_i ^- = -s_i$ . On this basis, shunting model \eqref{eq43} can be rewritten as 
\begin{align} 
    {\dot x_i} =  - \left( {a_i} + | s_i | \right){x_i} +  {b_i}s_i^ +  - {b^\prime_i} s_i^ - .
\end{align}
Integrated with the above model, the bioinspired robust control laws are then designed as follows
\begin{align}
    \tau _{i1} &=  -m_{i2} v_i r_i + m_{i3} w_i q_i + \beta _{ui} u_i  \nonumber \\
   &\quad + \frac{ {m_{i1}}} {\cos{\theta _i^\prime} \cos{\psi _i^\prime}} \left( -k_{iu} x_{i1} + \tau _{i1}^* \right),  \label{eq43_}\\
   \dot{x}_{i1} &= - \left( a_{i1} + \left| \tilde u_{ia} \right| \right) x_{i1} + g_{i1}\left( \tilde u_{ia} \right),  \label{eq44}\\
    \tau _{i2} &=  - \left( { m_{i3} - m_{i1} }\right) u_i w_i + \beta _{qi} q_i + \beta _{bi} \sin{\theta _i}  \nonumber \\
    & \quad - k_{iq} {m_{i4}} x_{i2} -{m_{i4}}\tilde{\theta}_{ia}, \label{eq45} \\
     \dot{x}_{i2} &= - \left( a_{i2} + \left| \tilde q_i \right| \right) x_{i2} + g_{i2}\left( \tilde q_i\right), \label{eq45_} \\
    \tau _{i3} &=  - \left( { m_{i1} - m_{i2} }\right) u_i v_i + \beta _{ri} r_i - k_{ir} {m_{i5}} x_{i3} \nonumber \\
    &\quad -{m_{i5}} \tilde{\psi}_{ia} / \cos{\theta _i} , \label{eq46}\\
     \dot{x}_{i3} &= - \left( a_{i3} + \left| \tilde r_i \right| \right) x_{i3} + g_{i3}\left( \tilde r_i\right), \label{eq47}
\end{align}
where $a_{i1}$, $a_{i2}$ and $a_{i3}$ are the positive constants to be designed, and functions $g_{ij}, (j=1,2,3)$ are defined as
\begin{align*}
    g_{ij} \left(   y \right) &= \begin{cases}
    b_{ij} y, & y \ge 0\\
    b^\prime_{ij} y, &  y < 0
    \end{cases}
\end{align*}
with $b_{ij}$ and $b^\prime_{ij}$ being some positive constants. Notice that for ease of analysis in the sequel we rewrite the expression of the shunting model in our designed controller, and it is easy to verify that the equations \eqref{eq44}, \eqref{eq45_} and \eqref{eq47} are equivalent to the primal form \eqref{eq43}.

\subsection{Stability Analysis}
Substituting into the equations \eqref{eq32}--\eqref{eq34} the neuro-dynamics based backstepping controllers in \eqref{eq43_}--\eqref{eq47}, we obtain the following modified error subsystems of $\tilde u_{ia}$, $\tilde q_i$ and $\tilde r_i$
\begin{align}
    \dot{\tilde{u}}_{ia} &= -k_{iu} x_{i1} +  \bar{d}_{i1}^\prime ,\label{eq47_} \\
    \dot{\tilde{q}}_i &= -k_{iq} x_{i2} - \tilde{\theta} _{ia} +  \bar d_{i4}, \label{eq48} \\
    \dot{\tilde{r}}_i &= -k_{ir} x_{i3}  - \tilde{\psi}_{ia}/\cos{\theta _i} + \bar d_{i5}. \label{eq49}
\end{align}
Here, $x_{i1}$, $x_{i2}$ and $x_{i3}$ are the extended states, due to the introduced shunting compensators, and are governed by the equations \eqref{eq44}, \eqref{eq45_} and \eqref{eq47}, respectively; $\bar{d}_{i1}^\prime = \bar{d}_{i1}-\dot{u}_{ia}^{cmd}$, $\bar{d} _{i4} = d_{i4}/m_{i4} - \dot{q_i}^{cmd}$ and $\bar{d} _{i5} = d_{i5}/m_{i5} - \dot{r_i}^{cmd}$ are the lumped disturbances.

We shall provide a theorem to establish the stability properties of the closed-loop system with all error states $(\tilde \theta _{ia}, \tilde \psi _{ia},  \tilde u_{ia}, \tilde q_i, \tilde r_i,  x_{i1}, x_{i2}, x_{i3})$ given the proposed controller \eqref{eq43_}--\eqref{eq47}. To facilitate the analysis, we shall follow a 2-step demonstration.

\textit{Step 1: Input-to-State Stability of $(\Tilde{\theta}_{ia},\tilde{\psi}_{ia})$-subsystem.}

The $(\Tilde{\theta}_{ia},\tilde{\psi}_{ia})$-dynamics are given in \eqref{eq30} and \eqref{eq31}. It is clear by control theory that as long as the control gains $k_{i \theta}$ and $k_{i \psi}$ are chosen as the positive numbers, the origin of the resulting system is input-to-state stable with respect to the inputs $\tilde{q}_i$ and $\tilde{r}_i / \cos{\theta _i}$. In particular, the following inequalities hold
\begin{align}
    \| \tilde{\theta}_{ia}\| &\le {e^{ - k_{i\theta} t}}\| {\tilde{\theta}_{ia}\left( 0 \right)} \| + \frac{1}{k_{i\theta}} \mathop {\sup }\limits_{0 \le \tau  \le t} \left\| {\tilde{q}_{i}\left( \tau  \right)} \right\|,  \label{eq_rv1} \\
    \| \tilde{\psi}_{ia} \| &\le {e^{ - k_{i\psi} t}} \| {\tilde{\psi}_{ia}\left( 0 \right)} \| + \frac{1}{k_{i\psi}} \mathop {\sup }\limits_{0 \le \tau  \le t} \lVert {  \frac{\tilde{r}_{i}\left( \tau  \right)}{\cos{\theta _i (\tau)}}} \rVert . \label{eq_rv2} 
\end{align}

\textit{Step 2: Input-to-State Stability of $(\tilde u_{ia}, x_{i1}, \tilde q_i, x_{i2}, \tilde r_i, x_{i3})$-subsystem.}

Let $\xi _{i1} =[\tilde u_{ia},  x_{i1}]^{\rm T}$, $\xi _{i2}= [ \tilde q_i, x_{i2}]^{\rm T}$, and $\xi _{i3} = [\tilde r_i, x_{i3}]^{\rm T} $. Define $ \bar{d}_{i2}^{\prime} = \bar{d}_{i4} -\tilde{\theta}_{ia} $ and $\bar{d}_{i3}^{\prime} =\bar{d}_{i5} -\tilde{\psi}_{ia}/\cos{\theta _i}$. The $\xi _{ij}$-dynamics $(j = 1,2,3)$ can be rewritten into the following  form 
\begin{align}
    \dot \xi _{ij} &= T_{j} \xi _{ij} + D_j \bar{d}_{ij}^{\prime}, \label{eq_rv3} 
\end{align}
with
\begin{align}
    T_{1}& = \begin{bmatrix}
    0 & -k_{iu}  \\
    g_{i1} & -\bar a_{i1}    
    \end{bmatrix}, \quad
    T_{2} = \begin{bmatrix}
    0 & -k_{iq} \\
    g_{i2} & -\bar a_{i2}
    \end{bmatrix},  \nonumber \\
     T_{3} & = \begin{bmatrix}
    0 &-k_{ir}\\
    g_{i3} & -\bar a_{i3} \\
    \end{bmatrix}, \quad
    D_1  = D_2 = D_3  =\begin{bmatrix}
    1  \\
    0 
    \end{bmatrix}. \nonumber
\end{align}
Notice that in matrix $T_{1}$ the constant $g_{i1}$ takes value of either $b_{i1}$ if $\tilde u_{ia} \ge 0$, or $b^\prime_{i1}$ otherwise, and accordingly, same as the matrices $T_{2}$ and $T_{3}$; $\bar a_{i1} = a_{i1}+\left| \tilde u_{ia} \right|$, $\bar a_{i2} = a_{i2}+\left| \tilde q_i \right|$, and $\bar a_{i3} = a_{i3}+\left| \tilde r_i \right|$.

\begin{lemma} \label{lm5}
Consider subsystems as in \eqref{eq_rv3}. If systems matrices $T_1$, $T_2$ and $T_3$ are made Hurwitz by suitably choosing the control design parameters $k_{iu}$, $k_{iq}$, $k_{ir}$, $a_{i1}$, $a_{i2}$, $a_{i3}$, $b_{i1}$, $b_{i2}$, $b_{i3}$, $b^\prime_{i1}$, $b^\prime_{i2}$ and $b^\prime_{i3}$. Then the above subsystems are all input-to-state stable in $t \in \left[0, \infty \right) $ with respect to the inputs $\bar{d}_{i1}^{\prime}$, $\bar{d}_{i4}^{\prime}$ and $\bar{d}_{i5}^{\prime}$, respectively.
\end{lemma}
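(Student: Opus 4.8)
The plan is to prove the claim for the representative subsystem $\xi_{i1} = [\tilde u_{ia}, x_{i1}]^{\rm T}$ and observe that the cases $j=2,3$ follow verbatim after relabelling, since the three systems in \eqref{eq_rv3} are structurally identical planar systems. Concretely, I would exhibit an ISS-Lyapunov function for each subsystem and derive a dissipation estimate of the form $\dot V_{ij} \le -c\,\|\xi_{ij}\|^2 + c'\,\|\xi_{ij}\|\,\|\bar d_{ij}^{\prime}\|$, after which ISS follows from the standard ISS-Lyapunov characterization in exactly the way used for the estimates \eqref{eq16} and \eqref{eq42__} earlier in the paper.

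First I would make the hypothesis quantitative. Each $T_j$ is $2\times 2$ with $\operatorname{tr}T_j = -\bar a_{ij} \le -a_{ij} < 0$ and $\det T_j = k_{i\cdot}\,g_{i\cdot} > 0$ for \emph{either} branch $g_{i\cdot}\in\{b_{i\cdot},b_{i\cdot}^{\prime}\}$; hence both eigenvalues lie in the open left half-plane for every admissible value of the state-dependent entries, which is the precise meaning of ``$T_j$ Hurwitz'' here. Next I would build the Lyapunov function by exploiting the companion-type structure rather than a generic Lyapunov equation. Taking the weighted energy $V_{i1} = \tfrac{1}{2}g_{i1}\tilde u_{ia}^2 + \tfrac{1}{2}k_{iu}x_{i1}^2$, which is continuous (indeed $C^1$) across the switching surface $\tilde u_{ia}=0$ since both branches agree there, a direct computation produces the cross-term cancellation $\dot V_{i1} = -k_{iu}\bar a_{i1}x_{i1}^2 + g_{i1}\tilde u_{ia}\,\bar d_{i1}^{\prime}$. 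This form is attractive because it automatically absorbs both the switching of $g_{i1}$ and the unboundedness of $\bar a_{i1}$, and it exposes the damping $\bar a_{i1}\ge a_{i1}>0$.

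The main obstacle is that the $(1,1)$ entry of $T_j$ is zero, so the first state carries no direct damping and $V_{i1}$ is only \emph{weakly} dissipative: it controls $x_{i1}$ but not $\tilde u_{ia}$. Moreover, no single constant $P$ in $V=\xi^{\rm T}P\xi$ can satisfy $P T_j + T_j^{\rm T} P \prec 0$ uniformly, because $\det(PT_j+T_j^{\rm T}P)$ turns negative as the state-dependent coefficient $\bar a_{ij}$ grows, so the naive frozen-coefficient Lyapunov-equation route fails. I would resolve this by strictification: augmenting $V_{i1}$ with a \emph{bounded} cross term $\epsilon\,\phi(\tilde u_{ia})x_{i1}$ (e.g. $\phi=\arctan$, $\epsilon<0$) injects damping $\sim|\tilde u_{ia}|$ into the first channel while keeping the indefinite contribution linear in $\bar a_{i1}$, so that the $-k_{iu}\bar a_{i1}x_{i1}^2$ term dominates it; this is the delicate calculation and must be balanced carefully near the origin. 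An equivalent and cleaner-to-state route is a Matrosov/zero-state-detectability argument, using that $\bar d_{i1}^{\prime}\equiv 0$ together with $x_{i1}\equiv 0$ forces $\tilde u_{ia}\to 0$, which upgrades the weak Lyapunov estimate to a strict ISS estimate.

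Finally I would close each subsystem with the ``whenever $\|\xi_{ij}\|\ge\mu_{ij}$'' step as in \eqref{eq42__} and conclude ISS on $[0,\infty)$ with respect to $\bar d_{ij}^{\prime}$, repeating for $j=2,3$. The caveat worth recording, and the reason the lemma phrases its hypothesis as ``suitably choosing the control design parameters,'' is that the effective ISS gain is governed by the products $k_{i\cdot}g_{i\cdot}$, so these constants must be taken large enough relative to the inputs, which are bounded by Assumption \ref{asp3}; the remaining steps (the Hurwitz check, a Young's-inequality bound on the input coupling, and the final comparison) are routine and mirror the earlier proofs.
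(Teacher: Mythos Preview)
Your route is genuinely different from the paper's. The paper does not build a Lyapunov function at all: it treats each subsystem as linear, writes the variation-of-constants solution
\[
\xi_{ij}(t)=e^{T_j t}\xi_{ij}(0)+\int_0^t e^{T_j(t-\tau)}D_j\bar d_{ij}^{\prime}\,d\tau,
\]
invokes the Hurwitz hypothesis to get $\|e^{T_j t}\|\le c_{1j}e^{-c_{2j}t}$, and reads off the ISS estimate $\|\xi_{ij}(t)\|\le c_{1j}e^{-c_{2j}t}\|\xi_{ij}(0)\|+c_{2j}^{-1}\sup_{0\le\tau\le t}\|\bar d_{ij}^{\prime}(\tau)\|$ directly. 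The state dependence of $T_j$ is acknowledged only in passing (``while the matrices $T_i$ may be time-varying, their Hurwitz properties can still be maintained'').

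Your approach is more honest about precisely the difficulties the paper glosses over: the switching of $g_{ij}$, the unbounded state-dependent entry $\bar a_{ij}$, and the zero $(1,1)$ entry that kills any constant-$P$ quadratic Lyapunov function. The weighted energy with exact cross-term cancellation is a clean observation, and the weak-to-strict upgrade via a bounded cross term or a Matrosov/detectability argument is the right kind of fix. What you gain is a rigorous treatment of the switched, state-dependent system; what you lose is the explicit gain constant $1/c_{2j}$ that the paper extracts for free and then \emph{uses quantitatively} in Theorem~\ref{thm1} (the small-gain conditions $1/c_{22}k_{i\theta}<1$ and $\alpha_\theta^{\prime 2}/c_{23}k_{i\psi}<1$). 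If you keep your Lyapunov route, you will need to produce a comparable explicit ISS gain from the strictified estimate so that the downstream argument still closes; otherwise your proof of the lemma, while correct in isolation, will not plug into the rest of the paper.
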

\begin{proof}
Due to the fact that the subsystems \eqref{eq_rv3} are all linear, their solutions can be readily obtained as 
\begin{align}
    { \xi _{i j}}\left( t \right) = {e^{ T_{j}t}}{ \xi _{i j}}\left( 0 \right) + \int_0^t {{e^{ T_{j}\left( {t - \tau } \right)}} D_{j}\bar{d}_{ij}^{\prime} d\tau }  \quad (j = 1,2,3), \label{eq_rv6} 
\end{align}
where $j$ indicates different subsystems. Since $T_{j}$ is Hurwitz, we have the inequality $\|  e^{ T_{j}t}\| \le  c_{1j} e^{ -c_{2j}t}$, in which $c_{1j}$ and $c_{2j}$ are some positive constants and, in particular, $-c_{2j}$ is greater than the real part of the maximum eigenvalue of $T_j$. It is noted that while the matrices $T_i$ may be time-varying, their Hurwitz properties can still be maintained, which can be easily verified by their analytical solutions of the eigenvalues. By this, it then follows from \eqref{eq_rv6} that
\begin{align}
    {\| \xi _{i j}}\left( t \right) \|&\le c_{1j} {e^{ -c_{2j}t}}\|{\xi _{i j}}\left( 0 \right)\| + \int_0^t {{c_{1j} e^{ -c_{2j}\left( {t - \tau } \right)}} D_{j}\bar{d}_{ij}^{\prime} d\tau } \nonumber \\   
    &\le c_{1j} {e^{ -c_{2j}t}}\|{\xi _{i j}}\left( 0 \right)\| + \frac{1}{c_{2j}} \mathop {\sup }\limits_{0 \le \tau  \le t} \left\|  \bar{d}_{ij}^{\prime}  \right\|.  \label{eq_rv7} 
\end{align}
It can be concluded from \eqref{eq_rv7} that $\xi _{ij}$-subsystems are all input-to-state stable with respect to the inputs  $\bar{d}_{ij}^\prime$ in $t \in \left[0, \infty \right) $. This completes the proof.
\end{proof}

By far, we have shown in an independent way that both $(\Tilde{\theta}_{ia},\tilde{\psi}_{ia})$-subsystem and $(\tilde u_{ia}, x_{i1}, \tilde q_i, x_{i2}, \tilde r_i, x_{i3})$-subsystem are of input-to-state stability properties. However, it is observed that some of the above subsystems are coupled in their states. Thus, it is necessary to show the stability property of the overall coupled system.  We provide the following theorem to establish this.
\newtheorem{theorem}{Theorem} 
\begin{theorem} \label{thm1}
Consider together the subsystem \eqref{eq30}, \eqref{eq31}, \eqref{eq44} and \eqref{eq_rv3}. Suppose that there exist some positive constant $\alpha _2$ such that the lumped disturbances $\|\bar{d}_{i1}^{\prime}\|+ \|\bar{d}_{i4}\| +\|\bar{d}_{i5} \| \le  \alpha _2$, and also that the results obtained in the \textit{Step 1} and \textit{Step 2} hold. The states of all above subsystems are input-to-state stable if $1/c_{22} k_{i \theta} < 1$ and $ \alpha _{\theta} ^{{\prime 2}}/ c_{23}k_{i \psi} < 1$, where $\alpha _{\theta}^{\prime}= \sup _{t\ge 0}1/\cos{\theta _i}$.
\end{theorem}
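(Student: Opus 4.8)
The plan is to regard the coupled system as a pair of feedback interconnections driven by bounded exogenous disturbances, and to certify boundedness of the entire state through the nonlinear (ISS) small-gain theorem, taking as building blocks the supremum-type ISS estimates \eqref{eq_rv1}--\eqref{eq_rv2} of \textit{Step 1} and \eqref{eq_rv7} of Lemma \ref{lm5}. I would first set aside the $\xi_{i1}$-subsystem: it is driven only by the lumped disturbance $\bar d_{i1}^\prime$, which is bounded by hypothesis (absorbed in $\alpha_2$) and does not feed through either loop, so its input-to-state stability follows directly from Lemma \ref{lm5} and it plays no further role in the interconnection analysis.

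Next I would analyze the $(\tilde\theta_{ia},\xi_{i2})$ loop, which is a genuine two-way interconnection. By \eqref{eq_rv1} the $\tilde\theta_{ia}$-channel maps its input $\tilde q_i$ (a component of $\xi_{i2}$) to its state with gain $1/k_{i\theta}$, whereas by \eqref{eq_rv7} the $\xi_{i2}$-channel maps its input $\bar d_{i2}^\prime=\bar d_{i4}-\tilde\theta_{ia}$ to its state with gain $1/c_{22}$, and $\tilde\theta_{ia}$ re-enters that input with unit coefficient. Composing these supremum bounds yields a closed loop whose effective gain from the exogenous part $\bar d_{i4}$ back to itself equals $1/(c_{22}k_{i\theta})$. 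The small-gain condition $1/(c_{22}k_{i\theta})<1$---exactly the first hypothesis---then certifies that this interconnection is ISS with respect to $\bar d_{i4}$, with the transients $e^{-k_{i\theta}t}\|\tilde\theta_{ia}(0)\|$ and $c_{12}e^{-c_{22}t}\|\xi_{i2}(0)\|$ decaying to zero.

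The $(\tilde\psi_{ia},\xi_{i3})$ loop is handled in the same way, the sole difference being that the coupling carries the factor $1/\cos\theta_i$ in both directions: the input to the $\psi$-channel is $\tilde r_i/\cos\theta_i$, while the feedback into $\xi_{i3}$ is $-\tilde\psi_{ia}/\cos\theta_i$. Bounding $1/\cos\theta_i\le\alpha_\theta^\prime$ and chaining \eqref{eq_rv2} with \eqref{eq_rv7} gives a composed loop gain $\alpha_\theta^{\prime 2}/(c_{23}k_{i\psi})$, so the second hypothesis $\alpha_\theta^{\prime 2}/(c_{23}k_{i\psi})<1$ is precisely the small-gain condition making this interconnection ISS with respect to $\bar d_{i5}$. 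With all three building blocks ISS under bounded exogenous inputs whose magnitudes are subsumed in $\alpha_2$, I would finally combine them to conclude that every error state $(\tilde\theta_{ia},\tilde\psi_{ia},\tilde u_{ia},\tilde q_i,\tilde r_i,x_{i1},x_{i2},x_{i3})$ remains bounded and ultimately settles within a bound fixed by $\alpha_2$ and the design gains, which is the asserted ISS property of the whole system.

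The main obstacle I anticipate is the careful bookkeeping of the two small-gain compositions, and in particular the $\psi$-loop, where the factor $1/\cos\theta_i$ enters twice and thereby produces the square $\alpha_\theta^{\prime 2}$. One must also ensure that $\alpha_\theta^\prime=\sup_{t\ge0}1/\cos\theta_i$ is finite, i.e.\ that $\theta_i$ never reaches $\pm\pi/2$; strictly, this demands an a priori (or inductive) argument keeping $\theta_i$ inside the admissible interval, since $\alpha_\theta^\prime$ appears in the very estimate used to establish boundedness, making the well-posedness of that bound somewhat circular unless argued with care.
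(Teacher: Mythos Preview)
Your proposal is correct and follows essentially the same route as the paper: the paper does not name the small-gain theorem explicitly but performs exactly the computation you describe, taking $L_\infty$ norms of \eqref{eq_rv1}--\eqref{eq_rv2} and \eqref{eq_rv7}, substituting one supremum bound into the other to obtain self-referential inequalities of the form $\|\tilde q_i\|_{L_\infty}\le \tfrac{1}{c_{22}k_{i\theta}}\|\tilde q_i\|_{L_\infty}+\text{(bounded terms)}$ and likewise for the $\psi$-loop, and then solving them under the two gain conditions. Your observation about the potential circularity in defining $\alpha_\theta^\prime$ is apt; the paper simply assumes this quantity is finite without further justification.
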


\begin{proof} \label{pf3}
For the ease of illustration, introduce the $L_\infty$ norm for a signal $u(t)$ as below
\begin{align}
    \| u(t) \|_{L_\infty} = \sup_{t \ge 0} \| u(t)\|. \label{eq_64}
\end{align}
It is immediate from \eqref{eq_rv7} that
\begin{align}
    \| \tilde{q}_i(t) \| &\le \| \xi _{i 2}\left( t \right) \| \le \frac{1}{c_{22}} \mathop {\sup}\limits_{0 \le \tau  \le t} \|  \tilde{ \theta}_{ia}  \|  + \alpha_q + \frac{1}{c_{22}} \alpha _2, \nonumber \\
    \| \tilde{r}_i(t) \| &\le \| \xi _{i 3}\left( t \right) \| \le \frac{1}{c_{23}} \mathop {\sup }\limits_{0 \le \tau  \le t} \| \frac{\tilde{\psi}_{ia}}{\cos{\theta _i}}\|  + \alpha_r + \frac{1}{c_{23}} \alpha _2, \nonumber
\end{align}
with $\alpha_q = c_{12}\|{\xi _{i 2}}\left( 0 \right)\|$, $\alpha_r = c_{13} \|{\xi _{i 3}}\left( 0 \right)\|$, and applying $L_{\infty}$ norm, together with \eqref{eq_rv1} and \eqref{eq_rv2}, yield
\begin{align}
    \| \tilde{q}_i(t) \|_{L_\infty} &\le \frac{1}{c_{22}}  \|  \tilde{ \theta}_{ia}(t) \|_{L_{\infty}} + \alpha_q + \frac{1}{c_{22}} \alpha _2 \nonumber \\
    &\le \frac{1}{c_{22}}  ( \frac{1}{k_{i \theta}} \| \tilde{q}_i(t) \|_{L_{\infty}} + \alpha _\theta) + \alpha_q + \frac{1}{c_{22}} \alpha _2 \nonumber \\
    & = \frac{1}{c_{22} k_{i \theta}} \| \tilde{q}_i(t) \|_{L_{\infty}} + \alpha_q + \frac{1}{c_{22}}  ( \alpha _\theta +\alpha _2),  \label{eq63}
\end{align}
with $\alpha _\theta = \| {\tilde{\theta}_{ia}\left( 0 \right)} \|$, and
\begin{align}
    \| \tilde{\theta}_{ia}(t) \|_{L_\infty} &\le  \frac{1}{k_{i\theta}} \left\| {\tilde{q}_{i}\left( t \right)} \right\|_{L_\infty} + \alpha _\theta, \nonumber \\
    &\le \frac{1}{c_{22} k_{i\theta}} \|  \tilde{ \theta}_{ia}(t) \|_{L_{\infty}} + \alpha _{\theta} + \frac{1}{k_{i\theta}} (\alpha_q + \frac{1}{c_{22}} \alpha _2). \label{eq64}
\end{align}
Considering the condition $1/c_{22} k_{i \theta} < 1$, we further get
\begin{align}
    \| \tilde{q}_i(t) \|_{L_\infty} &\le  \mu _2 \left(\alpha_q + \frac{1}{c_{22}}  ( \alpha _\theta +\alpha _2) \right), \label{eq65} \\
    \| \tilde{\theta}_{ia}(t) \|_{L_\infty} &\le \mu _2 \left(    \alpha _{\theta} + \frac{1}{k_{i\theta}} (\alpha_q + \frac{1}{c_{22}} \alpha _2)  \right),\label{eq66} 
\end{align}
with 
\begin{align*}
    \mu _2 = \frac{1}{1- 1/ c_{22} k_{i \theta}}.
\end{align*}
Employing a similar argument, the bounds on $ \| \tilde{r}_i (t)\|$ and $\| \tilde{\psi}_{i a}(t)\|$ can be estimated as follows
\begin{align}
    \| \tilde{r}_i(t) \|_{L_\infty} &\le  \mu_3 \left(\alpha_r + \frac{1}{c_{23}}  ( \alpha _{\theta}^\prime \alpha _\psi +\alpha _2) \right), \label{eq67} \\
    \| \tilde{\psi}_{ia}(t) \|_{L_\infty} &\le \mu _3 \left(    \alpha _{\psi} + \frac{\alpha _\theta^\prime}{k_{i\psi}} (\alpha_r + \frac{1}{c_{23}} \alpha _2)  \right),\label{eq68} 
\end{align}
with $\alpha _\psi = \| {\tilde{\psi}_{ia}\left( 0 \right)} \|$ and
\begin{align*}
    \mu _3 = \frac{1}{1-  \alpha _{\theta} ^{{\prime 2}}/ c_{23} k_{i \psi}}.
\end{align*}
As per the results obtained in \textit{Steps 1} and \textit{2} , together with the derived boundedness properties of signals $\tilde{\theta}_{ia}$, $\tilde{\psi}_{ia}$, $\tilde{q}_i$ and $\tilde{r}_i$, it is readily concluded that all the states of the subsystems \eqref{eq30}, \eqref{eq31}, \eqref{eq44} and \eqref{eq_rv3} are input-to-state stable. This completes the proof.
\end{proof}

\begin{remark} \label{rmk10}
It is worthwhile noting that in Theorem \ref{thm1} the boundedness conditions on $\dot{u}_{ia}^{cmd}$, $\dot q_i^{cmd}$ and $\dot r_i^{cmd}$ are used, which is in effect easy to verify, due to the fact that the functions $u_{ia}^{cmd}$, $r_i^{cmd}$ and $q_i^{cmd}$ defined in \eqref{eq12__}, \eqref{eq28} and \eqref{eq29}, respectively, are locally Lipschitz over a domain of interest, and such a local property can be maintained by the input-to-state stability results.
\end{remark}

\begin{remark} \label{rmk9}
It can be seen from the estimated bounds on error states that the robustness against the grouped disturbances depends mainly on the parameters $k_{i \theta}$, $k_{i \psi}$ and $c_{2j} \ (j =1,2,3)$; in particular, as mentioned $-c_{2j}$ are greater than the real part of the maximum eigenvalues of the $T_j$, which depends not only on $k_u$,  $k_q$, $k_r$, but also on $a_{ij}$, $b_{ij}$ and $b_{ij}^{\prime}$. Thus, it is possible to avoid taking large values for $k_u$,  $k_q$, $k_r$, while achieving good robustness. 
\end{remark}

\begin{remark} \label{rm12}
Note also that the shunting compensator introduced is essentially a dynamic model acting as a low-pass filter. Thus, except for the disturbance attenuation, it behaves well in terms of noise rejection and control smoothing. Besides, it is found that the outputs of the shunting compensators can be bounded upper by $b_{ij}$ and lower by $b_{ij}^\prime$, and hence the actuator saturation issue can be resolved. All of these properties show that the developed controller outperforms the conventional backstepping based methods.
\end{remark}

The overall stability of proposed UUVs formation system can be readily established using the following corollary. 
\newtheorem{corollary}{Corollary}
\begin{corollary} \label{corol1}
Under the assumptions of Lemma \ref{lm2} and Theorem \ref{thm1}, the proposed formation tracking system of underactuated underwater vehicles fleet is input-to-state stable with respect to the disturbances $d_i$, $i = 1,2,\ldots, N$ as input. 
\end{corollary}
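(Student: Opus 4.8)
The plan is to exploit the hierarchical (cascade) structure that the two preceding results already expose. The full closed loop splits into a lower dynamic layer, whose tracking errors $\zeta_i = [\tilde{\theta}_{ia},\tilde{\psi}_{ia},\tilde{r}_i,\tilde{q}_i,\tilde{u}_{ia}]^{\rm T}$ are driven by the disturbances $d_i$, and an upper consensus layer, whose error $e$ in \eqref{eq15__} is driven only through the interconnection term $\sigma$. The crucial observation is that $\sigma$ is itself a function of the lower-layer errors: from \eqref{eq14__}, each $\sigma_i$ is a smooth map comparing $(u_{ia},\theta_{ia},\psi_{ia})$ with $(u_{ia}^{cmd},\theta_{ia}^{cmd},\psi_{ia}^{cmd})$ that vanishes identically when $\tilde{u}_{ia}=\tilde{\theta}_{ia}=\tilde{\psi}_{ia}=0$. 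Hence the layers form a series connection $d_i \to \zeta_i \to \sigma \to e$, and the corollary follows by composing the ISS properties of Theorem \ref{thm1} and Lemma \ref{lm2}.

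First I would invoke Theorem \ref{thm1} to obtain that the lower-layer state $\zeta_i$ is ISS with respect to its lumped disturbances. Using Remark \ref{rmk10}, the command derivatives $\dot{u}_{ia}^{cmd}$, $\dot{q}_i^{cmd}$ and $\dot{r}_i^{cmd}$ are bounded (the commands being locally Lipschitz functions of signals kept bounded by the ISS estimates), so the lumped disturbances $\bar{d}_{i1}^{\prime}$, $\bar{d}_{i4}$ and $\bar{d}_{i5}$ are in turn dominated by a function of $\|d_i\|$ via Assumption \ref{asp3}. Consequently $\zeta_i$ is ISS \emph{directly} with respect to $d_i$: there exist a class-$\mathcal{KL}$ function and a class-$\mathcal{K}$ gain bounding $\|\zeta_i(t)\|$ in terms of $\|\zeta_i(0)\|$ and $\sup_{\tau}\|d_i(\tau)\|$.

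Next I would bound the interconnection term and close the cascade. Since each $\sigma_i$ in \eqref{eq14__} is continuously differentiable and vanishes at zero tracking error, and since $u_{ia}^{cmd}$ stays bounded by construction — the velocity constraint \eqref{eq21} keeps $\rho_i$, hence $u_{ia}^{cmd}=\sqrt{\rho_{ix}^2+\rho_{iy}^2+\rho_{iz}^2}$, in a compact set — there is a class-$\mathcal{K}$ function $\gamma_\sigma$ with $\|\sigma\|\le\gamma_\sigma(\|\zeta\|)$, where $\zeta$ stacks the $\zeta_i$. Feeding this into Lemma \ref{lm2}, whose estimate \eqref{eq16} provides ISS of $e$ with gain proportional to $(1/\varepsilon_1)\|\sigma\|$, and composing the two gains, the standard cascade lemma for ISS systems gives that $e$ is ISS with respect to $d_i$. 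Stacking the bounds on $e$ and on $\zeta$ then establishes ISS of the entire error state with respect to the disturbance vector $(d_1,\ldots,d_N)$.

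The main obstacle I anticipate is the circular-looking dependence flagged in Remark \ref{rmk10}: the ISS gains of the lower layer require bounded command derivatives, yet those derivatives depend on the very signals whose boundedness is being proved. I would resolve this by the usual well-posedness argument — on any finite interval the solutions exist and the locally Lipschitz commands produce bounded derivatives, so the ISS estimates propagate forward and close the loop without a genuine algebraic fixed point. A secondary technical point is verifying that $\gamma_\sigma$ is valid globally rather than only locally; this is handled by the a priori boundedness of $u_{ia}^{cmd}$ enforced through the velocity constraints, which confines the trigonometric terms in \eqref{eq14__} to a compact set throughout the motion.
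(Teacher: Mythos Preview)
Your proposal is correct and follows essentially the same cascade argument as the paper: Theorem \ref{thm1} bounds the lower-layer tracking errors in terms of the disturbances, and Lemma \ref{lm2} then propagates this through the interconnection term $\sigma$ to bound the consensus error $e$. Your version is in fact more careful than the paper's own proof, which leaves the $\sigma \le \gamma_\sigma(\|\zeta\|)$ step and the circular-dependency issue implicit, whereas you spell them out and resolve them explicitly.
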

\begin{proof}
 In Theorem \ref{thm1}, it is demonstrated that given the neuro-dynamics based backstepping controllers, as developed in \eqref{eq43_}--\eqref{eq47}, the error variables, i.e., $\tilde \theta _{ia}$, $\tilde \psi _{ia}$, $\tilde u_{ia}$, $\tilde q_i$ and $\tilde r_i$  $(i = 1,2,\ldots, N)$, can be made uniformly bounded with respect to the disturbances so long as the control parameters associated are chosen properly. Then, it is immediate by invoking Lemma \ref{lm2} that the overall consensus formation tracking error $e$ of the UUVs fleet is uniformly ultimately bounded. Also, due to the equation \eqref{eq15__} we conclude that $\dot e$ is bounded as well, and thus together with the Assumption \ref{asp2} as well as the kinematic equation \eqref{eq1} of UUVs, it is clear that all the states of the UUVs are uniformly ultimately bounded with respect to the disturbances. This completes the proof.
\end{proof}

\section{SIMULATION RESULTS}\label{s5}
This section presents several numerical simulations to illustrate the effectiveness of the proposed constrained formation protocol as well as the neurodynamics-based robust backstepping controller. In the simulations, four underactuated autonomous underwater vehicles are employed to construct a formation system. The aim of the system is that by leveraging the equipped onboard formation controller each UUV in the group can be steered to track a desired common 3D straight line and, meanwhile, a predefined quadrilateral formation pattern can also be formed and maintained.

Since the developed formation protocol is implemented in a distributed fashion, which means only locally neighboring information can be accessed by each UUV, the communication topology associated is depicted in Fig. \ref{fig_2}. The weights on the topological graph are selected as ${a_{21}} = {a_{32}}  ={a_{43}} = 1$, ${a_{12}} = a_{23} = a_{34} = 0.8$, and  ${b_1} = {b_2} = {b_3} = {b_4} = 1$. The equations of motion of UUVs are described by the equations \eqref{eq1} and \eqref{eq2} with following parameters (every parameter follows an international standard unit): $m_i = 10$, $I_{y,i}=3$, $I_{z,i}=2$, $\beta _{\dot{u},i} = 6$, $\beta _{\dot{v},i} = 1.1$, $\beta _{\dot{w},i} = 1.15$, $\beta _{\dot q,i} = 0.5$, $\beta _{\dot r,i} = 0.45$, $\beta _{u,i} = 1$, $\beta _{ v,i} = 1.1$, $\beta _{w,i} = 1.15$, $\beta _{q,i} = 0.2$, $\beta _{r,i} = 0.25$, and $\beta _{b,i} = 0.1$, $\left( {i \in \left\{ {1,2,3,4} \right\}}  \right)$. The desired 3D path is defined as $\eta _1^d\left( t \right) = {\left[0.7t +5, 0.1t+1, {5} \right]^{\rm T}}$, its derivative as $\dot \eta _1^d (t) = \left[ 0.7, 0.1,0 \right] ^{\rm T}$. To form a prescribed formation shape, the relative positions between UUVs are given by ${\Delta _{12}} = {\left[ {0,10,0} \right]^{\rm T}}$, ${\Delta _{21}} = {\left[ {0, - 10,0} \right]^{\rm T}}$, ${\Delta _{23}} = {\left[ { - 10,0,0} \right]^{\rm T}}$, ${\Delta _{32}} = {\left[ {10,0,0} \right]^{\rm T}}$, ${\Delta _{34}} = {\left[ {0, - 10,0} \right]^{\rm T}}$ and ${\Delta _{43}} = {\left[ {0,10,0} \right]^{\rm T}}$. The initial conditions of four UUVs are given as ${\eta _{1}(0)} = {\left[ {0,0,0,0,0} \right]^{\rm T}}$, ${\eta _2(0)} = {\left[ {-1,-10,0,0,0} \right]^{\rm T}}$, ${\eta _3(0)} = {\left[ {8.5,-10.1,0,0,0} \right]^{\rm T}}$, ${\eta _4(0)} = {\left[ {8.4,-0.1,0,0,0} \right]^{\rm T}}$, and ${\nu_i(0)} = \left[0.1,0,0,0,0\right]^{\rm T} ,\left( {i \in \left\{ {1,2,3,4} \right\}} \right)$. In order to make the simulation result more convincing, five controllers' performances are compared, that is, neurodynamics-based backstepping optimal controller (NBOC, i.e., the proposed approach),  backstepping optimal controller (BOC, i.e., without neurodynamics), neurodynamics-based backstepping controller (NBC, i.e., without online optimization), backstepping controller (BC), and backstepping sliding mode controller (BSMC). The control parameters of five controllers are listed in the TABLE \ref{t1}.

\begin{table*}[!htbp] 
\centering
\caption{Control parameters}
\begin{tabular}{cccccc}  
\toprule  
Parameters & NBOC & BOC & NBC & BC & BSMC \\
\midrule  
$K_{1,i}$ & diag(0.3,0.3,0.3) & diag(0.3,0.3,0.3) & diag(0.6,0.6,0.6) & diag(0.6,0.6,0.6) & diag(0.6,0.6,0.6)\\
$K_{2,i}$ & diag(10,10,10)&  diag(10,10,10)& diag(10,10,10) & diag(10,10,10)  & diag(20,15,15)\\
$Q$ & diag(10,10,10)& diag(10,10,10) & N/A& N/A & N/A\\
$R_1$ & diag(1,1,1)& diag(1,1,1) & N/A& N/A & N/A\\
$R_2$ & diag(1,1,1)& diag(1,1,1) & N/A& N/A & N/A\\
$p_1$,$p_2$,$p_3$ & 0.1,0.1,0.1& 0.1,0.1,0.1 & N/A& N/A & N/A\\
$a_i$ & 10  & N/A & 10 & N/A & N/A\\
$b_i$ & 30 & N/A & 30 & N/A & N/A\\
$d_i$ & 30 & N/A & 30 & N/A & N/A\\
\bottomrule 
\end{tabular}
\label{t1}
\end{table*}

\begin{figure}[!t]
\centering
\includegraphics[width=3in]{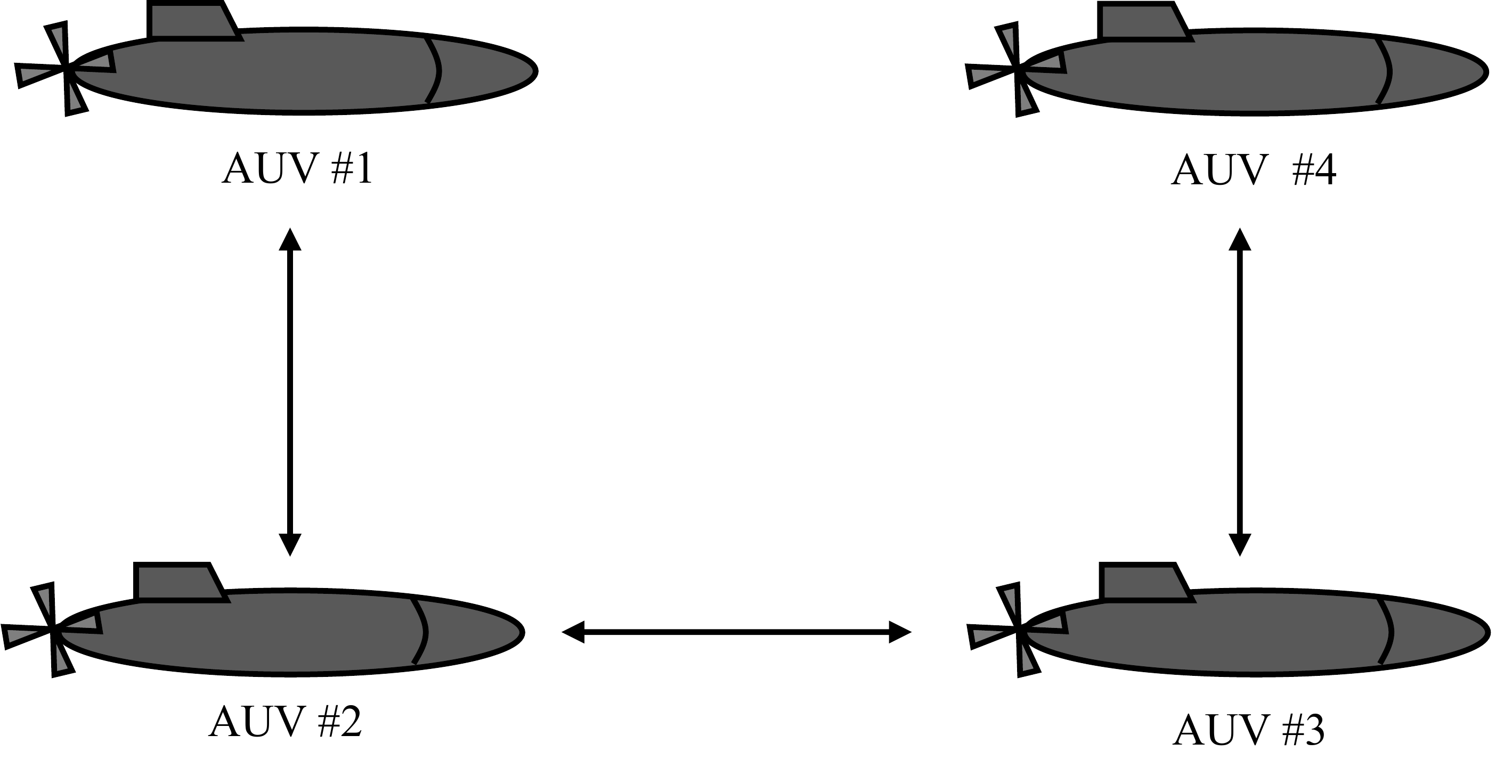}
\caption{The communication topology graph for the consensus formation tracking of 4 UUVs.}
\label{fig_2}
\end{figure}

\begin{figure}[!htbp]
\centering
\includegraphics[width=1\columnwidth]{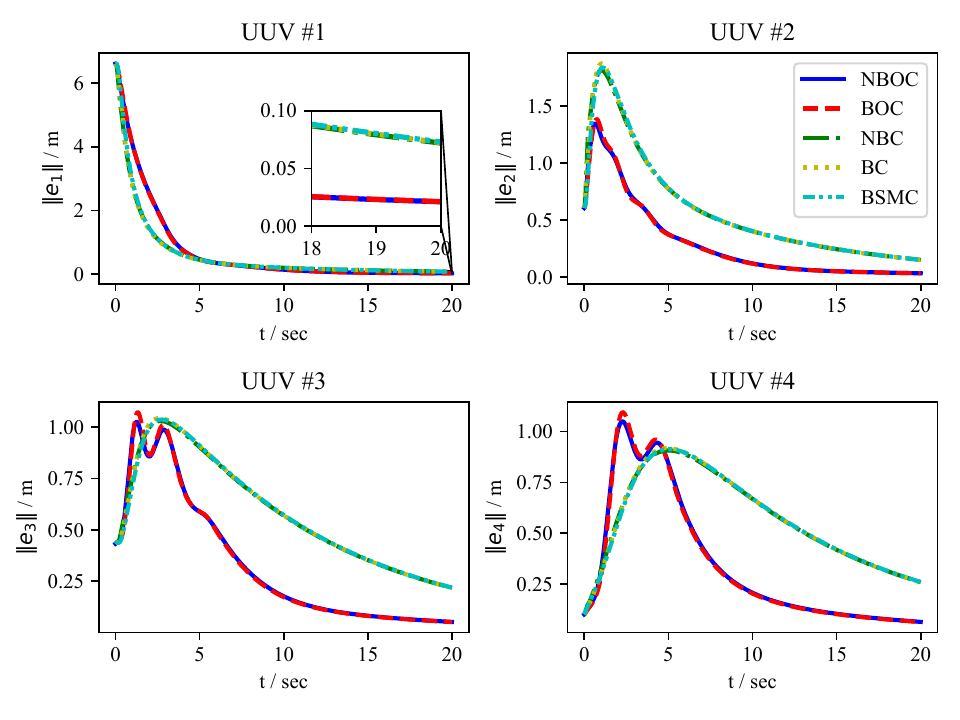}
\caption{The consensus formation tracking errors of 4 UUVs.}
\label{fig_3}
\end{figure}

\begin{figure}[!htbp]
\centering
\includegraphics[width=1\columnwidth]{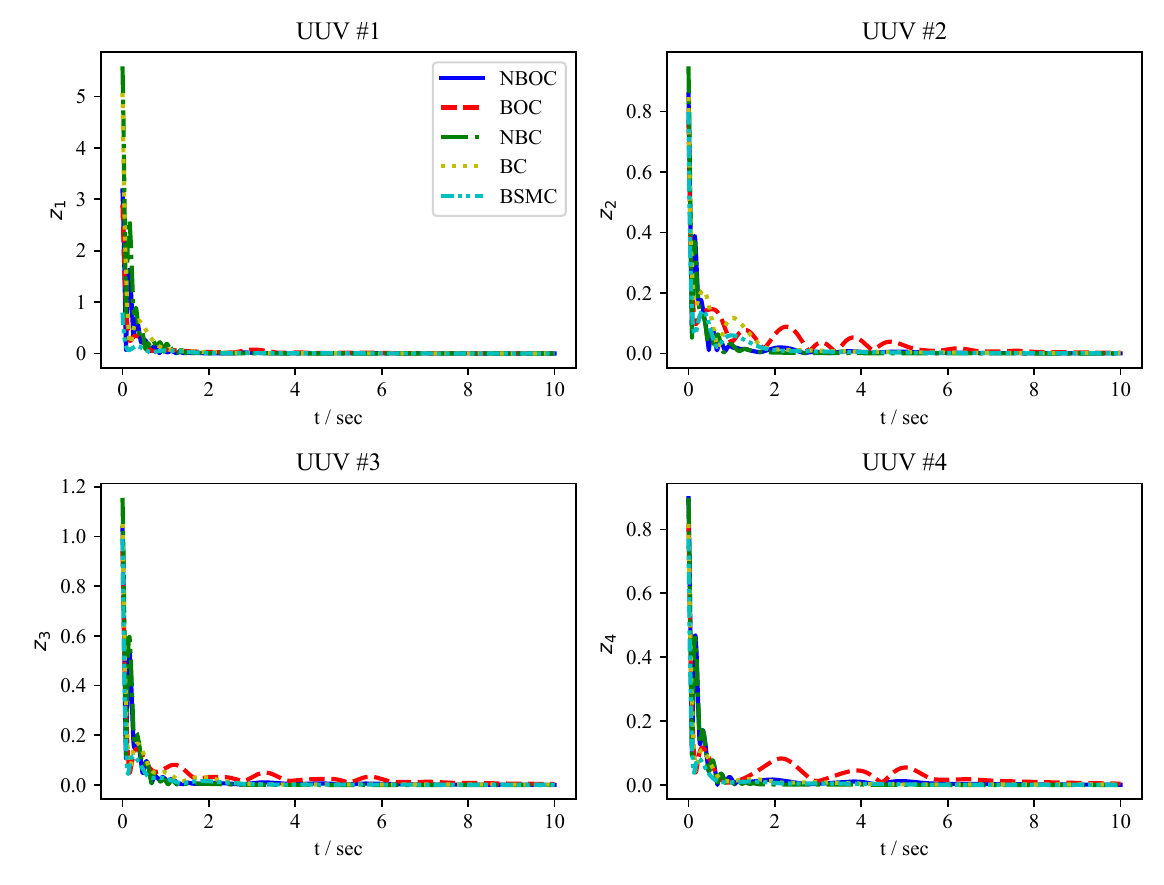}
\caption{The velocity tracking errors of 4 UUVs ($z_i = \sqrt{\tilde{u}_{ia}^2 + \tilde{q}_i^2 + \tilde{r}_i^2}$).}
\label{fig_4}
\end{figure}

\begin{figure}[!htbp]
\centering
\includegraphics[width=1\columnwidth]{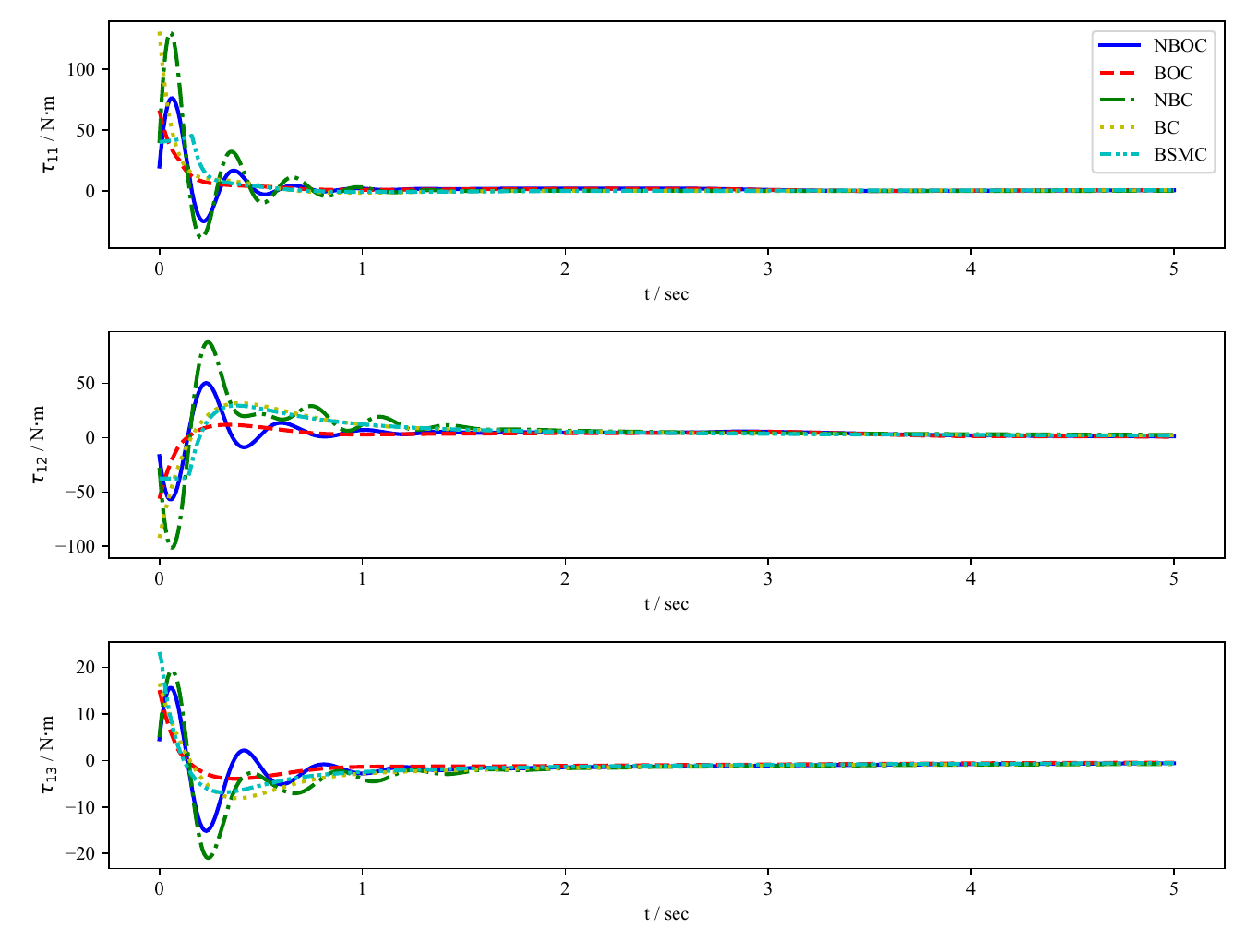}
\caption{The control inputs of UUV $1$.}
\label{fig_5}
\end{figure}

\begin{figure}[!htbp]
\centering
\includegraphics[width=1\columnwidth]{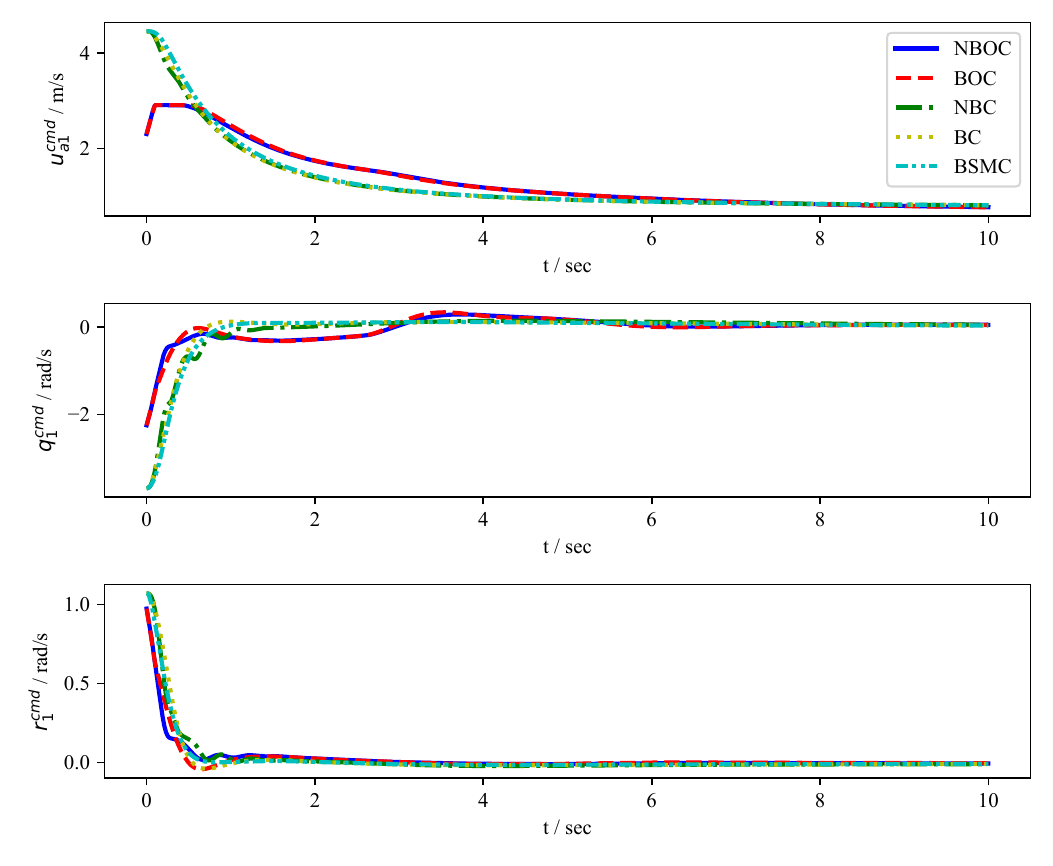}
\caption{The velocity commands of UUV $1$.}
\label{fig_6}
\end{figure}

\begin{figure}[!htbp]
\centering
\includegraphics[width=1\columnwidth]{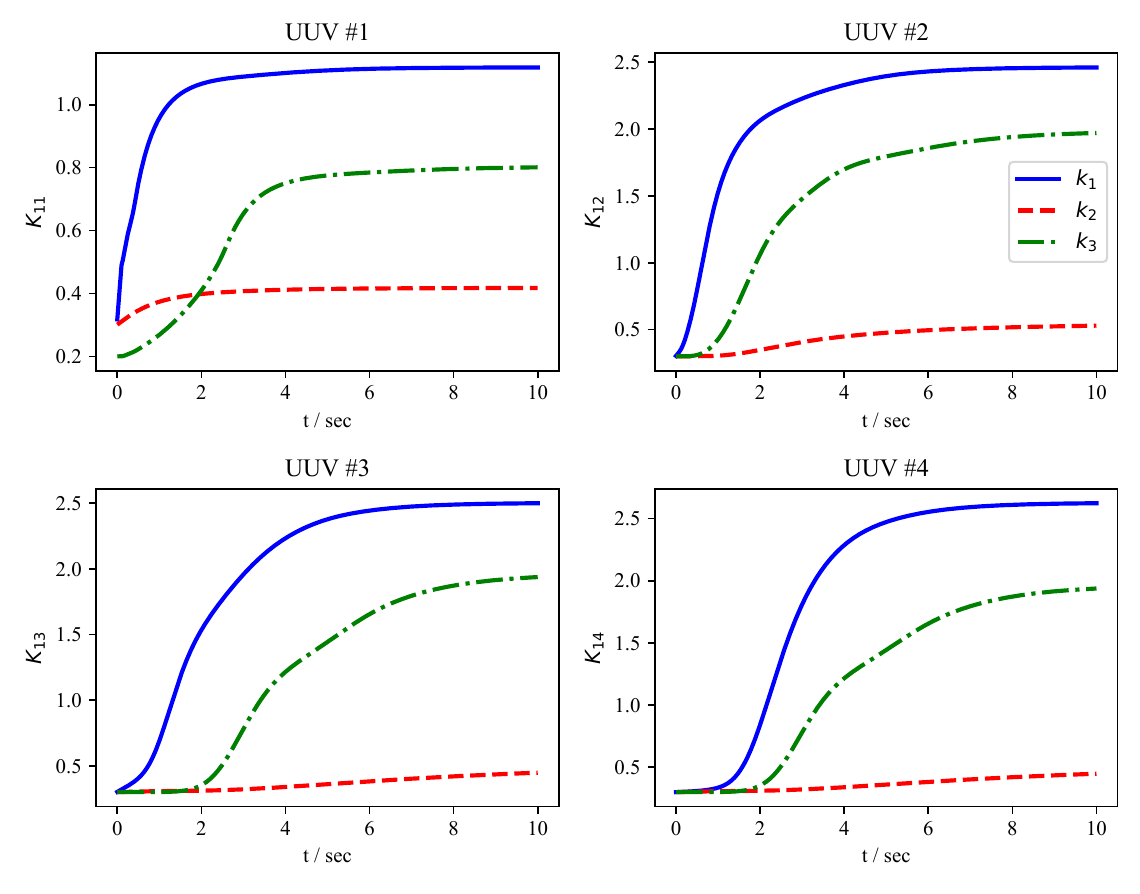}
\caption{The evolution of virtual control gains of 4 UUVs in NBOC.}
\label{fig_7}
\end{figure}

In the first case, there are no external disturbances added to the vehicles. It can be seen from the Figs. \ref{fig_3} and \ref{fig_4} that the formation tracking objective is perfectly achieved by all of the formation controllers. In particular, the controllers fitted with online optimization exhibit a faster rate of convergence as seen clearly from the behaviors of UUVs 2-4, while the control efforts needed are as nearly twice small as the NBC and BC approaches at the starting time, as shown in Fig. \ref{fig_5} (note that for conciseness only the UUV 1's control activities are presented, and actually the rest of vehicles behave much similar). In addition to that, another significant advantages of the online optimization are that it avoids an evident speed jump and, meanwhile, the velocity commands generated are confined within a given interval as observed in Fig \ref{fig_6}; in contrast, NBC, BC and BSMC methods all yield a relatively large velocity necessity in the beginning, due to the initial consensus errors. The properties obtained by the motion optimization are important for the controller design, since all of the real UUVs have their physical limitations on maneuvering capability. The optimization processes of the NBOC method for each UUV are presented in Fig. \ref{fig_7}, from which an automatic adjustment for the virtual control gains can be observed.

\begin{figure}[!htbp]
\centering
\includegraphics[width=1\columnwidth]{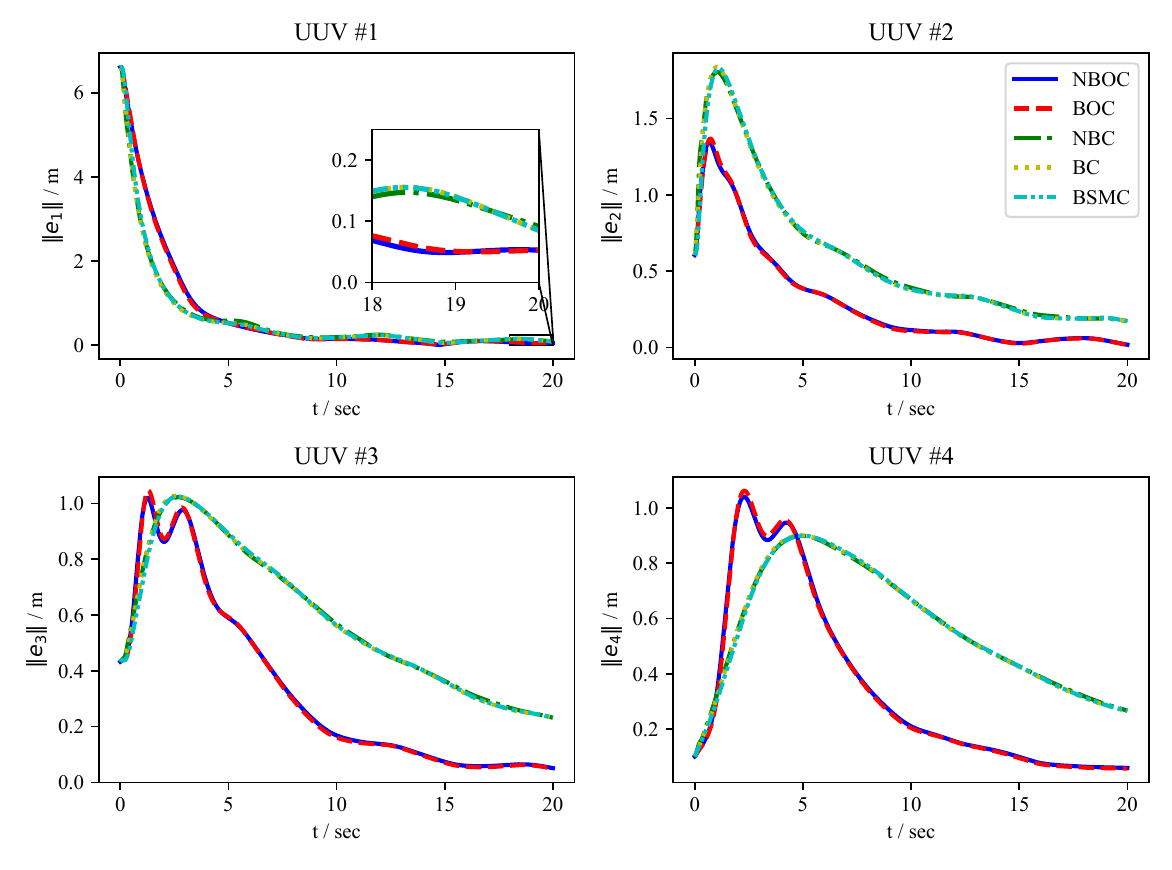}
\caption{The consensus formation tracking errors of 4 UUVs applied with disturbances.}
\label{fig_8}
\end{figure}

\begin{figure}[!htbp]
\centering
\includegraphics[width=1\columnwidth]{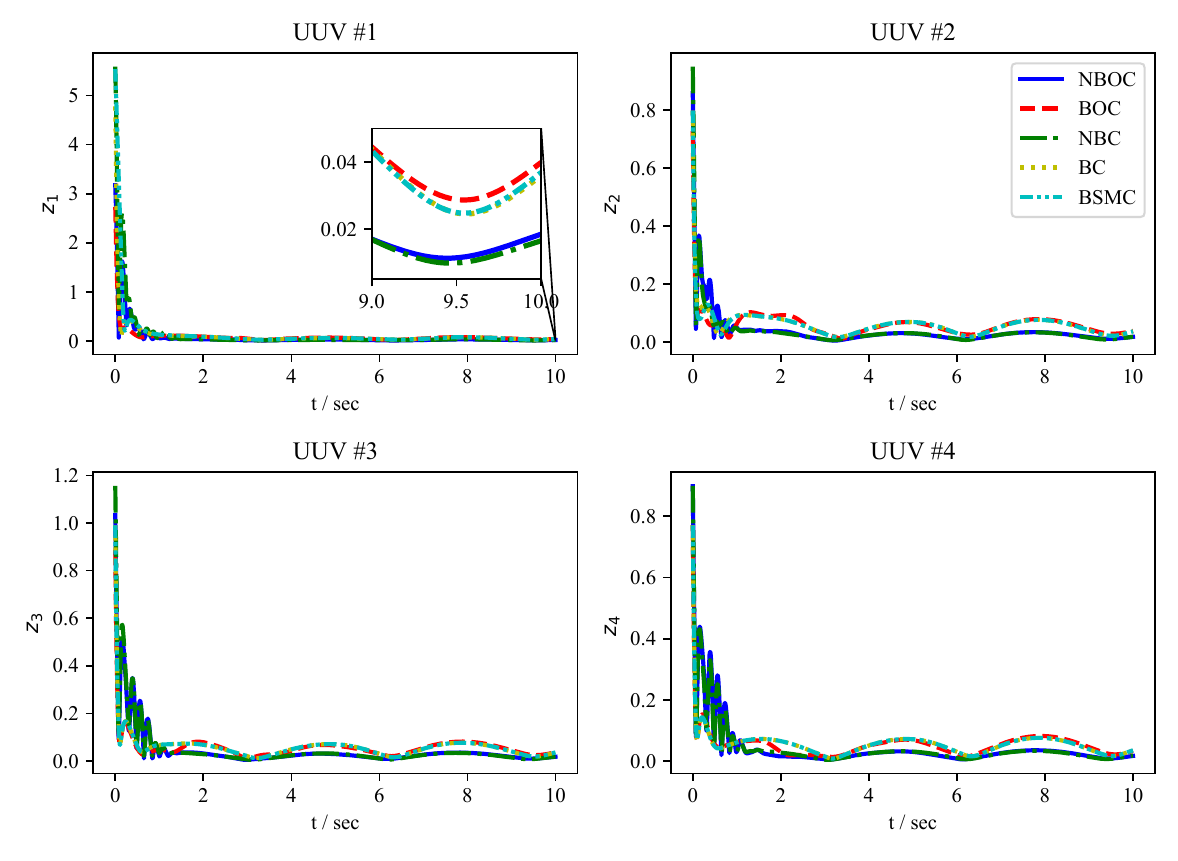}
\caption{The velocity tracking errors of 4 UUVs applied with disturbances.}
\label{fig_9}
\end{figure}

\begin{figure}[!htbp]
\centering
\includegraphics[width=1\columnwidth]{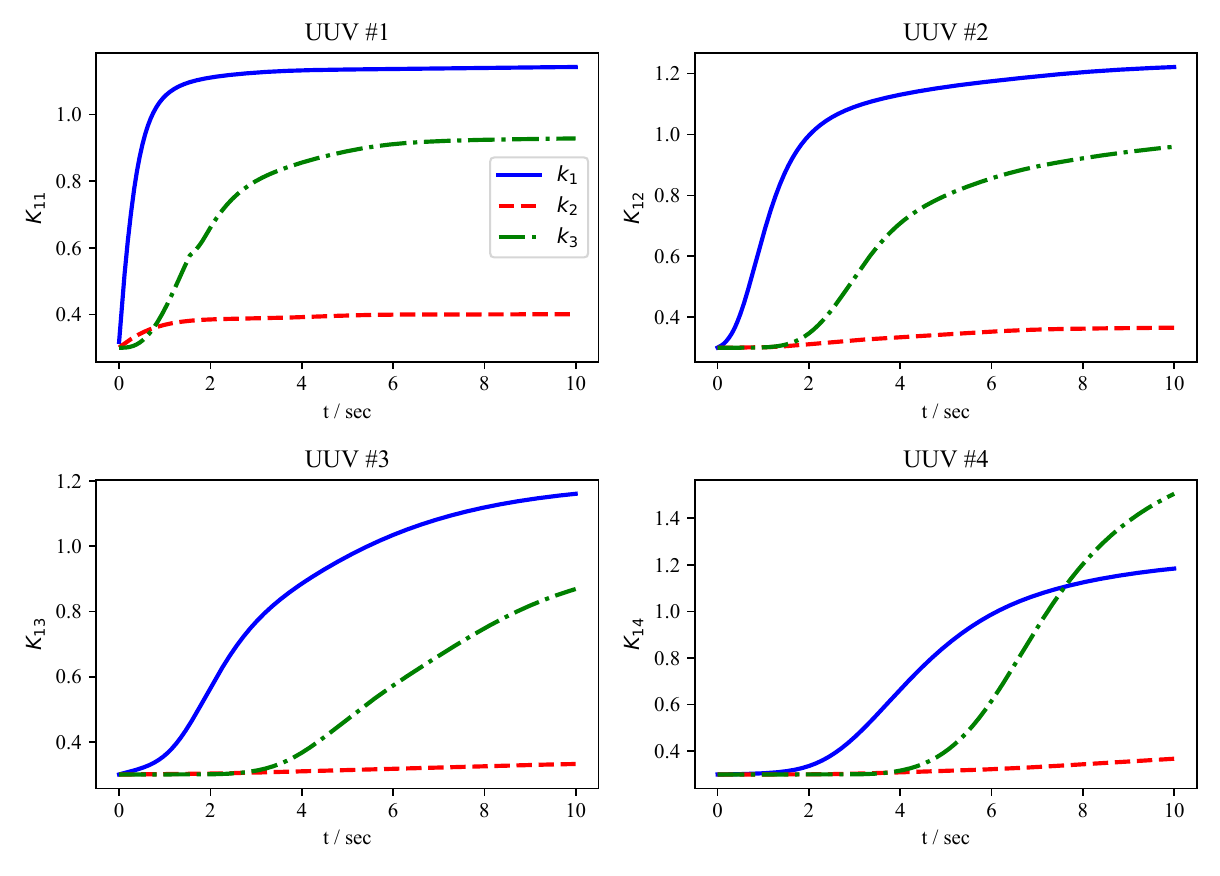}
\caption{The evolution of virtual control gains of 4 UUVs in NBOC applied with disturbances.}
\label{fig_10}
\end{figure}

In order to verify the robustness of the proposed formation control protocol, in the next case we inject the period exogenous disturbances into the four UUVs to simulate the influence of the ocean waves and currents. The disturbances applied are described by $d_i = \left[ {{3.1\sin \left( t \right),3.1\cos \left( t \right),2.1\sin \left( t \right)}}  \right.$, $\left. {{1.1\sin \left( t \right), 1.1\sin \left( t \right)}}  \right]$, $\left( {i \in \left\{ {1,2,3,4} \right\}} \right)$. The formation performances of five control methods under disturbances are plotted in Figs. \ref{fig_8} and \ref{fig_9}. Similar to the unperturbed situation, NBOC and BOC methods (i.e., assisted with online optimization) show a faster convergence property as seen in Fig. \ref{fig_8} and, meanwhile, have smaller consensus tracking errors compared to the other approaches. It implies that the optimal virtual control commands developed exhibit a better robustness property when faced with unknown disturbances. At the dynamic level, as illustrated by Fig. \ref{fig_9}, the controllers equipped with the neurodynamics model render apparently smaller velocity tracking errors, thus suggesting that such methods possess good robustness in disturbance attenuation. The optimization processes of the NBOC method for each UUV are depicted in Fig. \ref{fig_10}, all of which show a smooth convergence behavior even in the presence of disturbances. Based on the above observations, the proposed neurodynamics-based backstepping controller nested with an online optimization procedure achieves the best formation performances over the other four methods in terms of convergence speed, steady state accuracy, disturbance attenuation, and constraint fulfillment.

\section{CONCLUSION}\label{s6}
This paper addresses the robust constrained consensus formation tracking problem for a fleet of underactuated autonomous underwater vehicles in 3D space. A spherical coordinate transformation is introduced, based on which a novel distributed optimal formation control protocol is synthesized by iteratively solving a designed constrained optimization problem. As such, an optimal performance index can be achieved while the constraints on UUVs velocities can be fulfilled. Then, the feasibility and stability of the optimization problem are discussed. In order to realize the optimal control commands efficiently, a neuro-dynamics based robust backstepping controller is designed. The issue of "explosion of terms" incurred in conventional backstepping controllers is addressed, and the control performance as well as robustness properties against unknown disturbances are improved. Furthermore, a rigorous stability proof of the proposed formation control method is performed to guarantee the desired performance at the theoretical level. Finally, extensive numerical simulations are carried out to further demonstrate the effectiveness and superiority of the developed UUVs formation protocol.



 
\bibliographystyle{IEEEtran}
\bibliography{reference}

\begin{IEEEbiography}[{\includegraphics[width=1in,height=1.25in,clip,keepaspectratio]{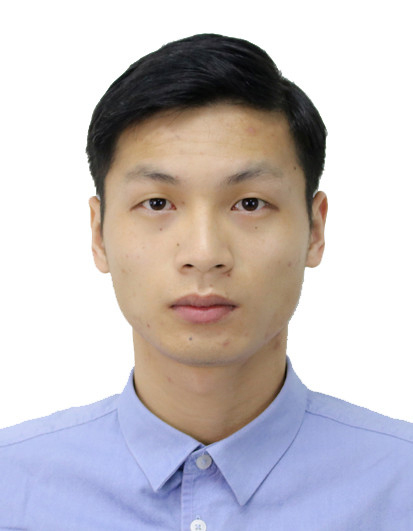}}]{Tao Yan}
(Graduate Student Member, IEEE) received the B.S. degree in automation from the North China Institute of Aerospace Engineering, Langfang, China, in 2016, and the M.S. degree in control science and engineering from the Zhejiang University of Technology, Hangzhou, China, in 2020. He is currently pursuing his Ph.D. degree at the University of Guelph, ON, Canada. His research interests include the nonlinear control, machine learning, distributed control and optimization, optimal estimation, and networked underwater vehicle systems.
\end{IEEEbiography}

\vskip  -2\baselineskip plus -1fil

\begin{IEEEbiography}[{\includegraphics[width=1in,height=1.25in,clip,keepaspectratio]{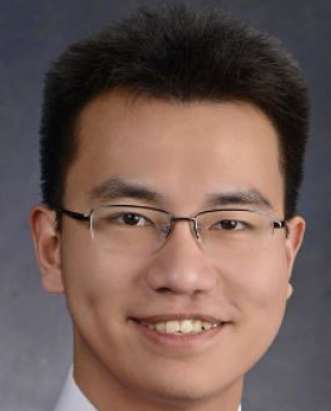}}]{Zhe Xu}
(Member, IEEE) received B.ENG. degree in Mechanical Engineering in 2018, and M.A.Sc. and Ph.D degree in Engineering Systems and Computing in 2019 and 2023, respectively, from University of Guelph. He is currently a post-doctoral fellow with Department of Mechanical Engineering at McMaster University. His research interests include networked systems, tracking control, estimation theory, robotics, and intelligent systems.
\end{IEEEbiography}

\vskip  -2\baselineskip plus -1fil

\begin{IEEEbiography}[{\includegraphics[width=1in,height=1.25in,clip,keepaspectratio]{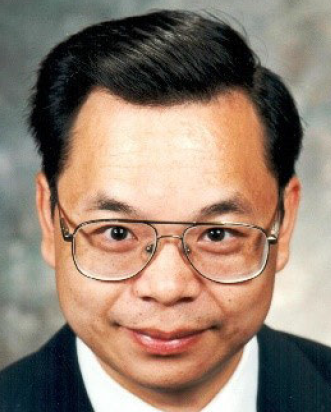}}]{Simon X. Yang}
(Senior Member, IEEE) received the B.Sc. degree in engineering physics from Beijing University, Beijing, China, in 1987, the first of two M.Sc. degrees in biophysics from the Chinese Academy of Sciences, Beijing, in 1990, the second M.Sc. degree in electrical engineering from the University of Houston, Houston, TX, in 1996, and the Ph.D. degree in electrical and computer engineering from the University of Alberta, Edmonton, AB, Canada, in 1999.  He is currently a Professor and the Head of the Advanced Robotics and Intelligent Systems (ARIS) Laboratory at the University of Guelph, Guelph, ON, Canada. His research interests include robotics, intelligent systems, control systems, sensors and multi-sensor fusion, wireless sensor networks, intelligent communication, intelligent transportation, machine learning, fuzzy systems, and computational neuroscience. 

Prof. Yang he has been very active in professional activities. He serves as the Editor-in-Chief of \textit{Intelligence \& Robotics}, and \textit{International Journal of Robotics and Automation}, and an Associate Editor of \textit{IEEE Transactions on Cybernetics}, \textit{IEEE Transactions on Artificial Intelligence}, and several other journals. He has involved in the organization of many international conferences.
\end{IEEEbiography}

\vspace{11pt}
\newpage
\begin{IEEEbiography}[{\includegraphics[width=1in,height=1.25in,clip,keepaspectratio]{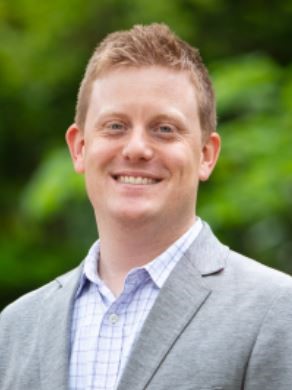}}]{S. Andrew Gadsden}
(Senior Member, IEEE) received the Ph.D. degree in mechanical engineering  from McMaster University, Hamilton, ON, Canada, in 2011. He is an Associate Professor with the Department  of Mechanical Engineering, McMaster University. He was an Associate/Assistant Professor with the University of Guelph, Guelph, ON, Canada, and the University of Maryland, College Park, MA, USA. His research area includes control and estimation theory, artificial intelligence and machine learning, and cognitive systems.   

Dr. Gadsden has been the recipient of numerous international awards and recognitions. In January 2022, he and his fellow air-LUSI project teammates were awarded the NASA’s Prestigious 2021 Robert H. Goddard Award in Science for their work on developing an airborne lunar spectral irradiance.  He is a certified Project Management Professional. He is an Associate Editor of \textit{Expert Systems with Applications} and is a reviewer for a number of ASME and IEEE journals and international conferences. He is an Elected Fellow of ASME.
\end{IEEEbiography}

\vfill

\end{document}